\newtheorem{remark}{Remark}
\newtheorem{Proposition}{Proposition}
\newtheorem{corollary}{Corollary}
\def\blfootnote{\xdef\@thefnmark{}\@footnotetext}
\begin{document}
\title{Phase-mismatched STAR-RIS with\\FAS-assisted RSMA Users}

\author{Farshad~Rostami~Ghadi,~\IEEEmembership{Member},~\textit{IEEE}, 
            Kai-Kit~Wong,~\IEEEmembership{Fellow},~\textit{IEEE},\\
            Masoud~Kaveh, 
	    F.~Javier~L\'{o}pez-Mart\'{i}nez,~\IEEEmembership{Senior~Member},~\textit{IEEE}, 
	    Yuanwei~Liu,~\IEEEmembership{Fellow},~\textit{IEEE},\\
	    Chan-Byoung Chae,~\IEEEmembership{Fellow},~\textit{IEEE}, and
	    Ross Murch,~\IEEEmembership{Fellow},~\textit{IEEE}
}
\maketitle	

\begin{abstract}
This paper considers communication between a base station (BS) to two users, each from one side of a simultaneously transmitting-reflecting reconfigurable intelligent surface (STAR-RIS) in the absence of a direct link. Rate-splitting multiple access (RSMA) strategy is employed and the STAR-RIS is subjected to phase errors. The users are equipped with a planar fluid antenna system (FAS) with position reconfigurability for spatial diversity. First, we derive the distribution of the equivalent channel gain at the FAS-equipped users, characterized by a $t$-distribution. We then obtain analytical expressions for the outage probability (OP) and average capacity (AC), with the latter obtained via a heuristic approach. Our findings highlight the potential of FAS to mitigate phase imperfections in STAR-RIS-assisted communications, significantly enhancing system performance compared to traditional antenna systems (TAS). Also, we quantify the impact of practical phase errors on system efficiency, emphasizing the importance of robust  strategies for next-generation wireless networks.
\end{abstract}

\begin{IEEEkeywords}
Fluid antenna system, simultaneously transmitting and reflecting, energy-splitting, rate-splitting multiple access, outage probability, average capacity.
\end{IEEEkeywords}

\maketitle
\blfootnote{The work of F. Rostami Ghadi is supported by the European Union's Horizon 2022 Research and Innovation Programme under Marie Sk\l odowska-Curie Grant No. 101107993. The work of K. K. Wong is supported by the Engineering and Physical Sciences Research Council (EPSRC) under Grant EP/W026813/1. The work of M. Kaveh  is supported in part by the Academy of Finland under Grants 345072 and 350464. The work of F. J. L\'opez-Mart\'inez is funded in part by Consejer\'{i}a de Universidad, Investigaci\'{o}n e Innovaci\'{o}n (Junta de Andaluc\'{i}a) through grant EMERGIA20-00297, and in part by MICIU/AEI/10.13039/50110001103 and FEDER/UE through grant PID2023-149975OB-I00 (COSTUME). The work of C. B. Chae is supported by the Institute for Information and Communication Technology Planning and Evaluation (IITP)/NRF grant funded by the Ministry of Science and ICT (MSIT), South Korea, under Grant RS-2024-00428780 and 2022R1A5A1027646. The work of R. Murch was supported by the Hong Kong Research Grants Council Area of Excellence Grant AoE/E-601/22-R.}
\blfootnote{\noindent F. Rostami Ghadi and F. J. L\'opez-Mart\'inez are with the Department of Signal Theory, Networking and Communications, Research Centre for Information and Communication Technologies (CITIC-UGR), University of Granada, 18071, Granada, Spain (e-mail: $\rm \{f.rostami, fjlm\}@ugr.es)$.}
\blfootnote{\noindent K. K. Wong is affiliated with the Department of Electronic and Electrical Engineering, University College London, Torrington Place, WC1E 7JE, United Kingdom and also affiliated with Yonsei Frontier Lab, Yonsei University, Seoul, Republic of Korea (e-mail: $\rm kai\text{-}kit.wong@ucl.ac.uk$).}
\blfootnote{\noindent M. Kaveh is with the Department of Information and Communication Engineering, Aalto University, Espoo, Finland (e-mail: $\rm masoud.kaveh@aalto.fi$).}
\blfootnote{\noindent Y. Liu is with the Department of Electrical and Electronic Engineering, The University of Hong Kong, Hong Kong SAR, China (e-mail: $\rm yuanwei@hku.hk)$.}
\blfootnote{\noindent C. B. Chae is with School of Integrated Technology, Yonsei University, Seoul, 03722, Republic of Korea. (e-mail: $\rm cbchae@yonsei.ac.kr)$.}
\blfootnote{\noindent R. Murch is with the Department of Electronic and Computer Engineering and Institute for Advanced Study (IAS), Hong Kong University of Science and Technology, Clear Water Bay, Hong Kong SAR, China (e-mail: $\rm eermurch@ust.hk)$.}

\blfootnote{Corresponding author: Kai-Kit Wong.}

\section{Introduction}\label{sec-intro}
\IEEEPARstart{T}{he rapid} growth of wireless communication is driven by the increasing demand for ultra-fast and low-latency connectivity across applications such as smart cities, massive Internet of Things (IoT) deployments, and real-time remote collaboration \cite{Gua2021Enabling,Tariq2020,Andrews20246G,10054381}. As sixth-generation (6G) wireless networks are envisaged to meet these ambitious demands, new challenges arise, including efficient spectrum utilization, interference management in dense environments, and ensuring reliable communication in high-mobility scenarios. Additionally, 6G systems must balance adaptability, energy efficiency, and support for diverse use cases in hostile environments.

To address these challenges, several Industry Specification Groups (ISGs) have been established to guide the development of key 6G technologies. Specifically, the multiple access techniques (MAT) ISG explores novel multiuser access strategies \cite{ETSI_RSMA}, and discusses the possible standardization of rate-splitting multiple access (RSMA). From the perspective of information theory, RSMA is a highly efficient multiple access technique. It manages inter-user interference by splitting user messages into common and private parts and at each user, adopts interference cancellation to remove the interference caused by the common message. The idea dates back in the 1990s \cite{Rimoldi-1996} and has sparked renewed interest in recent years \cite{Mao-2022tut,Clerckx-2023jsac}.

At the same time, 6G is expected to encounter challenges in coverage as the operating frequency moves up. For this reason, reconfigurable intelligent surfaces (RISs) have emerged as an intelligent solution to overcome this issue \cite{RefEE,Renzo-2020,Tang-2020,RIS-SRE-2023,Ref1new}. RIS is composed of a massive number of unit cells, and by controlling the phases of the reflected signals from the unit cells, it provides directed reflections to intended receivers, so as to reestablish the communication link. Given the potential of RISs, the RIS ISG was established to drive standardization efforts for RIS in enabling 6G networks \cite{ETSI_RIS}. 

However, conventional RIS implementations face deployment challenges, particularly in controlling phase shifts accurately under real-world conditions. Imperfections in hardware and dynamic channel variations cause phase-shift configuration errors, thereby degrading system performance \cite{Palomares2023enabling}. There is also the issue of having limited coverage if RIS is one-sided. Consequently, an interesting evolution of RIS is the emergence of simultaneously transmitting and reflecting RIS (STAR-RIS), which can provide coverage on both sides of the surface \cite{Liu2021star,Mu2022Simultaneously}. STAR-RIS is ideally implemented in the form of a smart window that lets signals reflected and transmitted intelligently. Motivated by the above discussion, it is of utmost relevance to integrate STAR-RIS technologies with RSMA when phase errors are considered. Presumably, RSMA effectively manages interference for high spectral efficiency while STAR-RIS can extend the wireless coverage for energy efficiency.

While this vision is fascinating, there are concerns if RSMA can actually deliver the high spectral efficiency under practical settings because interference cancellation is not always perfect and error propagation exists. Furthermore, the phase errors at STAR-RIS might render its beamforming ineffective. As such, STAR-RIS and RSMA alone may not be enough and additional degree-of-freedom (DoF) in the physical layer will be highly beneficial. In this context, fluid antenna system (FAS) can be a key element easing the burden of STAR-RIS and RSMA. 

FAS is an emerging concept broadly representing the new forms of reconfigurable antenna for shape and position flexibility \cite{wong2020FAS,wong2022bruce,wu2024fluid}. Instead of having a single signal sample in space at a fixed position, FAS facilitates processing signal samples in a given spatial region with a small number of radio-frequency (RF) chains. Artificial intelligence (AI) techniques have been identified to be a powerful tool to enable FAS \cite{Wang-2024ai} and a recent article in \cite{New2024aTutorial} provides a thorough tutorial on the topic. In \cite{Lu-2025}, the authors elaborated on the definition of FAS from electromagnetic theory. Note that the emerging concepts such as movable antennas \cite{zhu2024historical}, flexible-position antennas \cite{10480333} and pinching antennas \cite{Yang-2025pa}, are all within the remit of FAS. FAS is a timely concept that benefits from the recent advances in liquid-based antennas \cite{huang2021liquid,shen2024design,Shamim-2025}, movable arrays \cite{basbug2017design}, metamaterial-based antennas \cite{johnson2015sidelobe,hoang2021computational,Liu-2025arxiv} and pixel-based antennas \cite{zhang2024pixel}. The works in \cite{shen2024design,Shamim-2025,Liu-2025arxiv,zhang2024pixel} demonstrated testbeds validating the performance of FAS.

Since the introduction of the concept by Wong {\em et al.}~in 2020 \cite{Wong-fas2020cl,wong2020fluid}, FAS has become a thriving research topic due to the excitement for a new DoF in the physical layer. For instance, attempts have been made to advance the channel modelling of FAS for accuracy as well as tractability \cite{Khammassi2023,Espinosa-2024}. Diversity order analysis for FAS can also be found in \cite{New2023fluid,Vega2023asimple}. In \cite{New2024information}, the diversity-multiplexing trade-off of a point-to-point system with FAS at both ends was derived. Channel estimation for FAS has also been investigated in \cite{xu2024channel,zhang2023successive,Xu-2025ce}, with \cite{10751774} highlighting the importance of oversampling in the process. Moreover, recent efforts have also seen FAS being applied in secrecy communications \cite{Tang-2023,Ghadi-2024dec} and integrated sensing and communications (ISAC) \cite{Wang-2024isac,zhou2024fasisac,Zou-2024}.

\subsection{State-of-the-Art}
Although it makes sense to combine FAS with STAR-RIS and RSMA, consideration of these technologies together has never been done before and is not well understood. With that said, there have been researches studying the synergy between FAS and RIS \cite{Rostami2024performance,Rostami2024secrecy,Wong2023fluid,Zhu2024fluid,Wang2024performance}, considering phase errors for STAR-RIS \cite{Papazafeiropoulos2022coverage,Xu2022correlated,Qian2024performance,Khel2024analytical,Khalid2023simultaneously,Liu2024star}, STAR-RIS with RSMA \cite{Dhok2022rate,Katwe2023improved,Maghrebi2024cooperative,Krishnan2025star}, and combining FAS with RSMA \cite{Rostami2024fluid}. These works are reviewed below.

Equipping a two-dimensional (2D) FAS at the user side, the authors in \cite{Rostami2024performance} studied the performance of an optimized RIS-aided communication system, where they derived the outage probability (OP) and delay outage rate (DOR) in compact analytical expressions. By extending \cite{Rostami2024performance} to the non-orthogonal multiple access (NOMA) scenario, \cite{Rostami2024secrecy} analyzed the secrecy metrics in a secure RIS-aided FAS communication system. In \cite{Wong2023fluid}, randomized RISs were utilized as artificial scatterers to enhance multipath propagation, enabling FAS to differentiate user signals more effectively in multiuser scenarios. Later in \cite{Zhu2024fluid}, a novel FAS-enabled joint transmit and receive index modulation scheme was proposed for RIS-assisted millimeter-wave (mmWave) communication systems, enhancing spectral efficiency with reduced hardware complexity and power consumption. Additionally, by assuming a multi-RIS-assisted FAS communication system under Nakagami-$m$ fading, the authors in \cite{Wang2024performance} derived the OP expression using copulas. 

Regarding the studies of STAR-RIS with phase errors, the authors in \cite{Papazafeiropoulos2022coverage} analyzed the coverage probability of a STAR-RIS-assisted massive multiple-input multiple-output (MIMO) communication system in the presence of phase errors. Then \cite{Xu2022correlated} considered a STAR-RIS-aided two-user downlink system, proposed a correlated transmission and reflection phase-shift model considering the NOMA scheme, and derived the OP and diversity order for three phase-shift configuration strategies. More recently, \cite{Qian2024performance} investigated the impact of electromagnetic interference and phase errors on STAR-RIS-assisted cell-free massive MIMO systems. The authors also devised a projected gradient descent algorithm and fractional power control methods to improve spectral efficiency and mitigate performance degradation. Additionally, \cite{Khel2024analytical} evaluated the OP and ergodic capacity of STAR-RIS-assisted terahertz communications under practical factors such as inter-user interference and phase errors from discrete phase shifters, illustrating that phase errors significantly impact system performance and highlighting trade-offs between energy-splitting (ES) and mode switching (MS) protocols. On the other hand, by considering phase errors and hardware impairments, \cite{Khalid2023simultaneously} studied the ergodic rate for STAR-RIS-assisted NOMA downlink transmission. Following \cite{Khalid2023simultaneously}, the authors in \cite{Liu2024star} assumed perfect channel state information (CSI) and discrete phase shifts with quantization errors in STAR-RIS-aided communications under Nakagami-$m$ fading, deriving closed-form expressions for the OP and ergodic rate, as well as providing approximations for special cases, including random and continuous phase shifts. 

Apart from these, synergizing RSMA with other technologies has also generated many useful results. For instance, \cite{Dhok2022rate} studied a STAR-RIS-aided RSMA system, analyzing spatially correlated Rician channels under ES and MS configurations, and deriving expressions for the OP and channel capacity as well as asymptotic OP for both infinite and finite block-length transmissions. Subsequently, \cite{Katwe2023improved} presented a coupled phase-shift STAR-RIS-aided uplink RSMA system to improve spectral efficiency, formulating a resource allocation problem for joint optimization of power, decoding order, and beamforming. Recently, with an active STAR-RIS, \cite{Maghrebi2024cooperative} explored its role in optimizing cooperative RSMA in a downlink network, maximizing the sum rate while meeting rate, hardware, and power constraints, with significant gains over baseline methods. Most recently, \cite{Krishnan2025star} considered STAR-RIS-assisted RSMA, studying its performance in terms of OP and sum-rate under both cases of perfect and imperfect CSI. Finally, FAS has been shown to be effective in improving the reliability of RSMA. In \cite{Rostami2024fluid}, the OP of a FAS-assisted RSMA system was derived, showing that FAS combined with RSMA significantly outperforms the traditional antenna system (TAS) and NOMA.

\subsection{Motivation and Contributions}
The unique advantages of FAS in enhancing adaptability and mitigating interference, STAR-RISs in extending coverage and improving energy efficiency, and RSMA in maximizing spectral efficiency, have motivated us to explore the integration of these technologies. While each of these technologies offers great benefits on its own, their combined performance, especially in the presence of practical impairments such as phase errors in STAR-RISs under the RSMA scheme, is not well understood. Motivated by this, this paper seeks to investigate key aspects of the integrated system, specifically: (i) \textit{how the incorporation of FAS affects the performance of STAR-RIS-assisted communication systems}, and (ii) \textit{the impact of phase errors in affecting the performance of FAS-aided STAR-RIS systems when combined with RSMA signaling.} 

By addressing these critical questions, our aim is to provide valuable insights into the practical implementation of these technologies. To address this, we propose a novel FAS-assisted  STAR-RIS communication system, in which a TAS transmitter adopts the RSMA signaling technique to simultaneously serve FAS-equipped users across both the reflection and transmission regions, while accounting for phase errors in the STAR-RIS. Specifically, our main contributions are as follows:
\begin{itemize}
\item \textbf{Phase error modeling and channel gain analysis:} We begin by incorporating phase errors into the model and derive both the cumulative distribution function (CDF) and probability density function (PDF) for the equivalent channel gain at the FAS-equipped users. These are characterized by the multivariate $t$-distribution, which provides a more realistic representation of the channel conditions compared to traditional models, capturing the impacts of phase errors in practical systems.

\item \textbf{Performance analysis:} Using the derived CDF and PDF, we present compact analytical expressions for the OP as well as the asymptotic OP. Also, we introduce a novel heuristic approach to approximate the expectation of the maximum of many correlated Gamma random variables (RVs). This approach enables us to derive the average capacity (AC) of the system by leveraging Jensen's inequality, providing an efficient and tractable method for analyzing the system performance under phase errors.

\item \textbf{Numerical simulations and performance evaluation:} Through extensive simulations, we show that deploying FAS at the users in STAR-RIS-aided RSMA communication systems significantly enhances system performance in terms of both OP and AC. Our results illustrate that FAS greatly outperforms TAS by mitigating interference and improving adaptability. Moreover, the results reveal that RSMA achieves a lower OP compared to the NOMA scheme in both phase error and ideal phase scenarios. Additionally, we investigate the impact of practical phase errors in the STAR-RIS and provide insights into how these imperfections affect the overall system performance. Our findings highlight the performance degradation caused by phase errors, shedding light on the practical trade-offs in realistic deployments of STAR-RIS.
\end{itemize}

\subsection{Organization and Notations}
The remainder of this paper is organized as follows. Section \ref{sec-sys} presents the system model, including the channel and signal models, as well as the characterization of signal-to-noise-and-interference ratio (SINR). Then Section \ref{sec-perf} provides the main results, where the CDF, PDF, OP, and AC are derived. Section \ref{sec-num} presents numerical results that validate the analytical findings. Finally, Section \ref{sec-con} concludes the paper.

We use boldface uppercase and lowercase letters for matrices and vectors, e.g. $\mathbf{X}$ and $\mathbf{x}$, respectively. The operators $(\cdot)^T$, $(\cdot)^{-1}$, $\left|\cdot\right|$, $\max\left\{\cdot\right\}$, and $\det\left(\cdot\right)$ denote the transpose, inverse, magnitude, maximum, and determinant, respectively.

\section{System Model}\label{sec-sys}
\subsection{Channel and Signal Model}
We consider a STAR-RIS-aided communication framework as illustrated in Fig.~\ref{fig_model}, in which a TAS-equipped base station (BS) serves two users $u\in\left\{\mathrm{r,t}\right\}$, located in the reflection and transmission regions, with the assistance of a STAR-RIS that is composed of $K$ passive elements. Moreover, we consider the ES protocol,\footnote{The results in this paper can be extended to MS and time-switching (TS) configurations, where MS involves dynamically allocating different numbers of active STAR-RIS elements for transmission and reflection based on the network requirements, while TS allocates distinct time blocks for each mode, enabling flexible switching between transmission and reflection phases \cite{Mu2022Simultaneously}. These approaches allow for optimized resource management, adapting to varying network conditions and user demands.} where all elements of the STAR-RIS are configured to operate simultaneously in both reflection and transmission modes. The total radiated energy is divided into two portions, with the adjustable reflection and transmission coefficients represented by $ \beta_{k, \mathrm{r}}$ and $ \beta_{k, \mathrm{t}}$, respectively. Due to the passive nature of the STAR-RIS, these coefficients satisfy the constraint $\beta_{k, \mathrm{r}}^2 + \beta_{k, \mathrm{t}}^2 \leq 1.$\footnote{In practical STAR-RIS implementations, the reflection and transmission phase shifts are often coupled due to hardware constraints, limiting independent control. However, recent advances in metasurface design have explored configurations that enable independent or partially independent phase-shift control through advanced tuning mechanisms. For instance, a design featuring six metallic layers, including the top and bottom patches, two ground layers, and two middle layers with a reflective-type phase-shift circuit and bias circuit, allows for independent control of the phases of the reflection and transmission coefficients \cite{Hong2024star}. Our analysis assumes independent phase shifts to provide a general and flexible performance evaluation, aligning with the idealized case and potential future hardware improvements. Nonetheless, the proposed framework can be extended to accommodate coupled phase-shift models, as discussed in \cite{Xu2022performance}, where the impact of phase correlation is analyzed.} To simplify the system design, it is further assumed that all elements of the STAR-RIS share the same amplitude coefficients, such that $ \beta_{k, \mathrm{r}} = \beta_{\mathrm{r}},  \beta_{k, \mathrm{t}} = \beta_{\mathrm{t}}, \forall k = 1, \dots, K$.  Moreover, we assume that the direct link between the BS and users $u$ is blocked due to obstacles \cite{Rostami2023analytical}. It is also assumed that both the reflecting user (denoted as $\mathrm{r}$) and transmitting user (denoted as $\mathrm{t}$) are equipped with planar FAS, whereas the BS is outfitted with a single antenna. Each FAS-equipped user $u$ features a grid structure consisting of $N_u^l$ ports, uniformly distributed over a linear space of length $W_u^l\lambda$ with the carrier wavelength $\lambda$ for $l\in\left\{1,2\right\}$, where the total number of ports for user $u$ is $N_u=N_u^1\times N_u^2$ and the total size of the surface is $W_u=\left(W_u^1\times W_u^2\right)\lambda^2$. To simplify the indexing of these 2D structures, we introduce a mapping function $\mathcal{F}\left(n_u\right)=\left(n_u^1,n_u^2\right)$, $n_u=\mathcal{F}^{-1}\left(n_u^1,n_u^2\right)$, which converts the 2D indices into a one-dimensional (1D) form such that $n_u\in\left\{1,\dots,N_u\right\}$ and $n_u^l\in\left\{1,\dots,N_u^l\right\}$. In the given scenario, where the fluid antenna ports can switch freely to any position and can be positioned arbitrarily close to each other, the resulting channels exhibit spatial correlation. Therefore, under conditions of rich scattering, the covariance between two arbitrary ports, $n_u=\mathcal{F}^{-1}\left(n_u^1,n_u^2\right)$ and $\tilde{n}_u=\mathcal{F}^{-1}\left(\tilde{n}_u^1,\tilde{n}_u^2\right)$, for users $u$ is described as \cite{New2024information}
\begin{align}
\varrho_{n_u,\tilde{n}_u}=\mathcal{J}_0\left(2\pi\sqrt{\left(\frac{n_u^1-\tilde{n}_u^1}{N_u^1-1}W_u^1\right)^2+\left(\frac{n_u^2-\tilde{n}_u^2}{N_u^2-1}W_u^2\right)^2}\right),
\end{align}
where $\mathcal{J}_0\left(\cdot\right)$ denotes the zero-order spherical Bessel function of the first kind. This is a classical result when the channels have infinitely many scatterers in the environment.

\begin{figure}[!t]
\centering
\includegraphics[width=1\columnwidth]{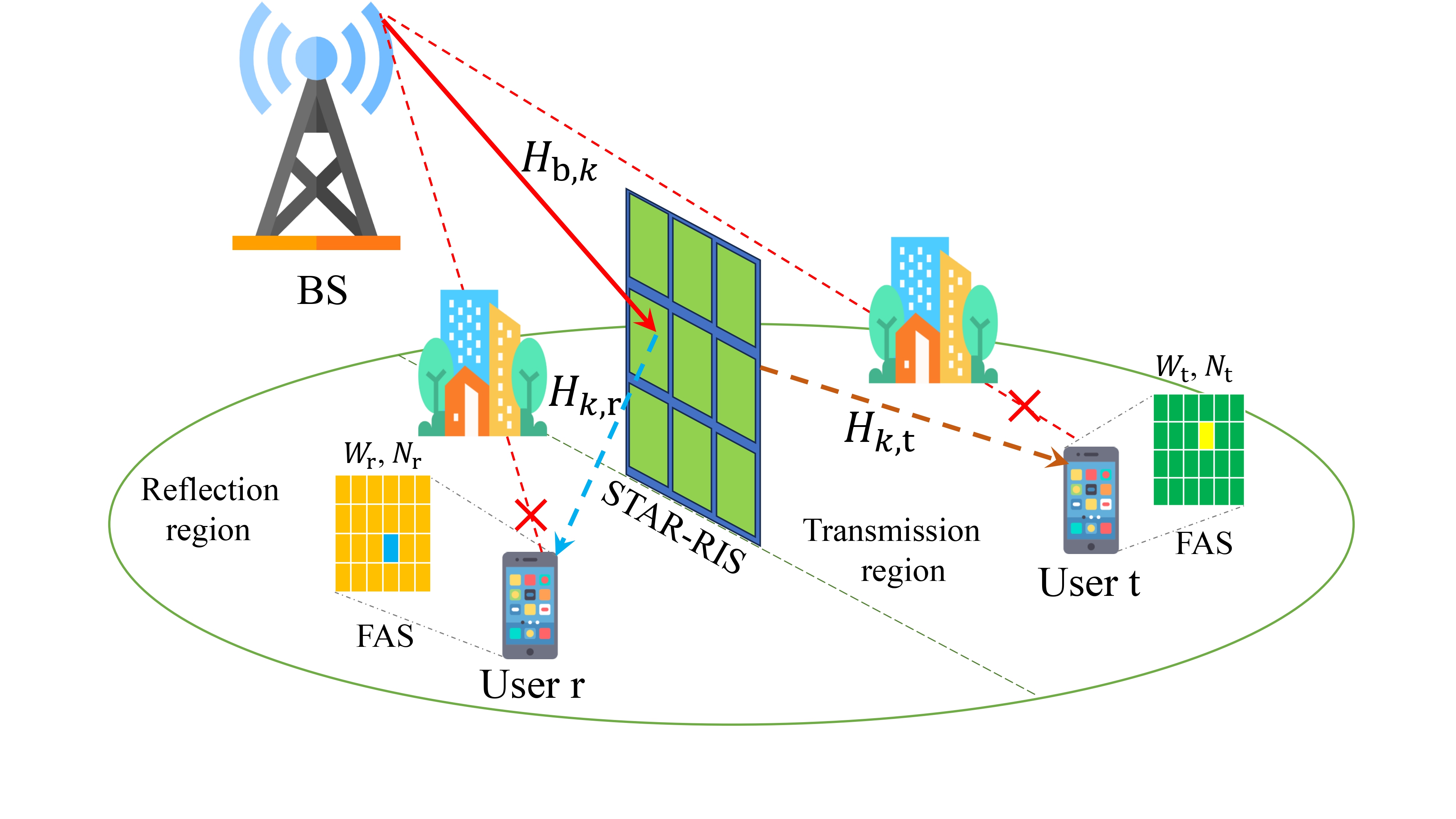}
\caption{A FAS-aided STAR-RIS system with two users, one on each side.}\label{fig_model}
\end{figure}

In this model, it is assumed that the BS uses a rate-splitting strategy to support communication with two users simultaneously. Each user's message $w_u$ is transmitted to its respective user $u$. According to the RSMA scheme, $w_u$ is divided into two components: a common message $w_{\mathrm{c}}$ and a private message  $w_{\mathrm{p},u}$. The common message  $w_{\mathrm{c}}$ is shared across both users and is encoded into a common stream $s_\mathrm{c}$ using a shared codebook accessible to both users. This common stream is designed to be decoded by every user. Meanwhile, the private messages  $w_{\mathrm{p},u}$ are independently encoded into individual private streams $s_{\mathrm{p},u}$, tailored for each user. Therefore, the resulting signal transmitted by the BS to each user is then expressed as
\begin{align}
	x=\underset{\text{common message}}{\underbrace{\sqrt{P_\mathrm{b}\alpha_\mathrm{c}}s_\mathrm{c}}}+\underset{\text{ private messages}}{\underbrace{\sqrt{P_\mathrm{b}\alpha_{\mathrm{p,r}}}s_{\mathrm{p,r}}+\sqrt{P_\mathrm{b}\alpha_{\mathrm{p,t}}}s_{\mathrm{p,t}}}},
\end{align}
where $P$ is the transmit power, while $\alpha_\mathrm{c}$ and $\alpha_{\mathrm{p},u}$ represent the power allocation factors for $s_\mathrm{c}$ and $s_\mathrm{p}$, respectively. These factors satisfy the constraint $\alpha_\mathrm{c}+\sum_{u\in\left\{\mathrm{r,t}\right\}}\alpha_{\mathrm{p},u}=1$, ensuring the total power is appropriately distributed among the streams. Hence, the received signal at user $u$ is defined as
\begin{align}\label{eq-y}
y_u = \sqrt{L_u}\sum_{k=1}^{K}H_{\mathrm{b},k}\beta_{k,u}\mathrm{e}^{j\phi_{k,u}}H_{k,u}x+z_u,
\end{align}
in which $z_u$ denotes the complex additive white Gaussian noise (AWGN) with zero mean and variance of $\sigma^2$, and $L_u=\left(d_\mathrm{SRIS}d_u\right)^{-\chi}$ represents the path-loss, where $\chi>2$ is the path-loss exponent and $d_i$ denotes the Euclidean distance between the BS placed at $\left(X_0,Y_0,Z_0\right)$ and the node $i\in\left\{{\rm SRIS}, {\rm r}, {\rm t}\right\}$ located at $\left(X_i,Y_i,Z_i\right)$, which can be determined as \cite{Rostami2024fluid}
\begin{align}
d_i = \sqrt{\left(X_0-X_i\right)^2+\left(Y_0-Y_i\right)^2+\left(Z_0-Z_i\right)^2}.
\end{align}
Besides, in \eqref{eq-y}, $H_{\mathrm{b},k}$ represents the fading channel coefficient between the BS and the $k$-th element of the STAR-RIS, while $H_{k,u}$ denotes the fading channel coefficient between the  $k$-th element of the STAR-RIS and user $u$. The fading coefficients are normalized to have unity power, and their corresponding average magnitudes for $k = 1, \dots, K$ are denoted as $a_\mathrm{b} = \mathbb{E}\left[\left|H_{\mathrm{b},k}\right|\right]$ and $a_u = \mathbb{E}\left[\left|H_{k,u}\right|\right]$. It is worth recognizing that the values $\{a_\mathrm{b}, a_u\}\leq 1$ under all circumstances, with equality occurring only in the limiting case of a deterministic fading channel, i.e., where fading is absent. For the
sake of compactness, $\tilde{a}_u=\sqrt{a_\mathrm{b}a_u}$ is defined \cite{Badiu2020Communication}. Additionally, $\phi_{k,u}$ defines the adjustable phases induced by the $k$-th element of the STAR-RIS during transmission and reflection. In this scenario, it can be demonstrated that the received signal-to-noise ratio (SNR) at the users is maximized when the phase shift for each element $\phi_{k,u}$ is controlled to co-phase the transmitting and reflecting links, i.e., $\phi_{k,u} = -(\eta_k+\zeta_{k,u})$, where $\eta_k$ and $\zeta_{k,u}$ represent the phases corresponding to the channel coefficients $H_{\mathrm{b},k}$ and $H_{k,u}$, respectively. However, in practice, the phases of the transmitting and reflecting elements cannot be set precisely due to inherent imperfections in phase estimation and the limited quantization of phase states at the STAR-RIS, which leads to a random phase error $\Theta_{k,u}$ \cite{Badiu2020Communication}, i.e., $\hat\phi_{k,u} = \phi_{k,u}+\Theta_{k,u}$. As such, the equivalent complex channel observed by user $u$ can be expressed as
\begin{align}
H_u = \frac{\beta_u}{K}\sum_{k=1}^{K}\left|H_{\mathrm{b},k}\right|\left|H_{k,u}\right|\mathrm{e}^{j\Theta_{k,u}}.
\end{align}
Accordingly, \eqref{eq-y} is rewritten as 
\begin{align}
y_u = \sqrt{L_u}KH_ux+z_u.
\end{align}
We assume that $\{\Theta_{k,u}\}$ are independent and identically distributed (i.i.d.) with common characteristic function expressed as a sequence of complex numbers $\left\{\varphi_p\right\}_{p\in \mathbb{Z}}$, i.e.,
\begin{align}
\varphi_p=\mathbb{E}\left[\mathrm{e}^{jp\Theta_{u}}\right], 
\end{align}
which are denoted as the $p$-th circular moments of $\Theta_{u}$. To account for phase errors induced by noisy estimates during the configuration stage, as well as due to channel aging, 
we adopt a phase error model based on the von Mises distribution \cite{Mardia2000directional}. Specifically, the phase estimation error, denoted as $\Theta_{k,u}$, is modeled as a zero-mean von Mises variable with a concentration parameter $\kappa$ that quantifies the accuracy of the phase estimation. The characteristic function is given by $\varphi_p=\frac{\mathcal{I}_p\left(\kappa\right)}{\mathcal{I}_p\left(\kappa\right)}$, where $\kappa$ is the concentration parameter capturing the accuracy of the estimation and $\mathcal{I}_p\left(\kappa\right)$ is the modified Bessel function of the first kind and order $p$. This framework effectively captures the impact of phase estimation inaccuracies and enables a precise analysis of their influence on system performance.

\subsection{SINR Characterization}
In RSMA, each user receives not only the common and its respective private message but the private messages for other users. This results in additional interference, complicating the process of decoding the desired messages. To manage this, RSMA adopts a two-step decoding strategy at each user. In the first step, the user focuses on decoding the common message while treating all private messages as noise. On the other hand, given by the FAS concept, it is assumed that only the port which maximizes the received SINR for the FAS-equipped users is activated. Hence, the received SINR for decoding the common message at the $n_u$-th port of user $u$ is defined as
\begin{align}
\gamma_{\mathrm{c},u} = \frac{\overline{\gamma}\alpha_\mathrm{c}L_uK^2g_{\mathrm{fas},u}}{\overline{\gamma}\left(\alpha_\mathrm{p,r}+\alpha_\mathrm{p,t}\right)L_uK^2g_{\mathrm{fas},u}+1},\label{eq-snr-c}
\end{align}
where $\overline{\gamma}=\frac{P_\mathrm{b}}{\sigma^2}$ denotes the average transmit SNR and $g_{\mathrm{fas},u}$ is the equivalent channel gain at user $u$ which is defined as 
\begin{align}\label{eq-gain}
g_{\mathrm{fas},u} = \max\left\{g_u^1,\dots,g_u^{n_u}\right\}=g_u^{n_u^*},
\end{align}
in which $g_u^{n_u}=\left|H_u^{n_u}\right|^2$ denotes the channel gain at the $n$-th port of user $u$ and $n_u^*$ denotes the best port index at user $u$ that maximizes the channel gain, i.e.,
\begin{align}
n_u^* = \arg\underset{n_u}{\max}\left\{\left|H_u^{n_u}\right|^2\right\}.
\end{align}

In the second stage, once the common message has been successfully decoded and subtracted, each user then decodes its private message while the private messages for another user are treated as residual interference. The SINR for decoding the private message at the $n_u$-th port of user $u$ is given by
\begin{align}
\gamma_{\mathrm{p},u} = \frac{\overline{\gamma}\alpha_{\mathrm{p},u}L_uK^2g_{\mathrm{fas},u}}{\overline{\gamma}\alpha_{\mathrm{p},\bar{u}}L_uK^2g_{\mathrm{fas},u}+1},\label{eq_SINR_p}
\end{align}
where $\bar{u}$ denotes the complement of $u\in\left\{\mathrm{r,t}\right\}$.

\section{Performance Analysis}\label{sec-perf}
\subsection{Statistical Characterization}
To analyze the system performance, we need to determine the statistical distribution of the equivalent channel gain $g_{\mathrm{fas},u}$, which is defined as the maximum of $N_u$ correlated RVs $g_u^{n_u}$, where $n_u=1,\dots,N_u$, as given by \eqref{eq-gain}. To achieve this, we first need to derive the marginal distribution of the channel gain $g_u^{n_u}=\left|H_u^{n_u}\right|^2$. It was proven in \cite{Badiu2020Communication} that for sufficiently large $K$, the channel coefficient $H_u^{n_u}$ has a non-circularly symmetric complex normal RV with independent real and imaginary parts $U_u=\mathfrak{R}\left(H_u^{n_u}\right)$ and $V_u=\mathfrak{I}\left(H_u^{n_u}\right)$, respectively, and $U\sim\mathcal{N}\left(\mu_U,\sigma^2_U\right)$ and  $V\sim\mathcal{N}\left(0,\sigma^2_V\right)$ with the parameters $\mu_U=\varphi_1\tilde{a}^2$, $\sigma^2_U=\frac{1}{2K}\left(1+\varphi_2-2\varphi_1^2\tilde{a}^4\right)$, and $\sigma^2_V=\frac{1}{2K}\left(1-\varphi_2\right)$. Also, for large $K$ and $\varphi_1>0$, $\left|H_u^{n_u}\right|$ is well approximated by a Nakagami-$m$ distribution, and thus the channel gain $g_u^{n_u}$ follows a Gamma distribution with the following PDF and CDF \cite{Badiu2020Communication}
\begin{align} \label{eq-g-pdf}
f_{g_u^{n_u}}\left(g\right) = \frac{m^m}{\Gamma\left(m\right)\overline{g}_u^m} g^{m-1}\exp\left(-\frac{mg}{\overline{g}_u}\right)
\end{align}
and
\begin{align}\label{eq-g-cdf}
F_{g_u^{n_u}}\left(g\right)= \frac{1}{\Gamma\left(m\right)}\Upsilon\left(m,\frac{mg}{\overline{g}_u}\right),
\end{align}
where $\Upsilon\left(\cdot\right)$ denotes the lower incomplete Gamma function, and $\overline{g}_u$ and $m$ represent the average channel gain and the fading parameter, respectively, which are defined as
\begin{align}\label{eq-g}
\overline{g}_u=\beta_u^2\varphi_1^2\tilde{a}^4
\end{align}
and
\begin{align}\label{eq-m}
m=\frac{K}{2}\frac{\varphi_1^2\tilde{a}^4}{1+\varphi_2-2\varphi_1^2\tilde{a}^4}.
\end{align}

Following \cite[Fig.~1]{Badiu2020Communication}, we assume all channels to exhibit Rician fading with unit power and Rice factor $\mathcal{K}$, which gives 
\begin{align}
a_\mathrm{b}=a_u=\sqrt{\frac{\pi}{4\left(\mathcal{K}+1\right)}}{}_1\mathcal{F}_1\left(-\frac{1}{2};1;\mathcal{K}\right). 
\end{align}
Regarding \eqref{eq-gain}, the CDF of the equivalent channel gain at the FAS-equipped user can be mathematically expressed as
\begin{align}
F_{g_{\mathrm{fas},u}}\left(g\right) &= \Pr\left(\max\left\{g_u^1,\dots,g_u^{n_u}\leq g\right\}\right)\\
&= F_{g_u^1,\dots,g_u^{n_u}}\left(g,\dots,g\right), \label{eq-joint}
\end{align}
in which \eqref{eq-joint} represents the joint multivariate CDF of $N_u$ correlated channel gains $g_u^{n_u}$, where $g_u^{n_u}$ follows the Gamma distribution as revealed in \eqref{eq-g-pdf} and \eqref{eq-g-cdf}. Following \cite{Rostami2021copula}, we exploit Sklar's theorem for deriving the distribution of $g_{\mathrm{fas},u}$, which can generate the joint multivariate distribution of two or more correlated RVs by only knowing the marginal distribution. Hence, we have
\begin{align}
F_{g_{\mathrm{fas},u}}\left(g\right)=C\left(F_{g_u^1}\left(g\right),\dots,F_{g_u^{N_u}}\left(g\right);\theta\right),\label{eq-cdf-fas}
\end{align}
where $C\left(\cdot\right):\left[0,1\right]^{N_u}\rightarrow\left[0,1\right]$ denotes the copula function and $\theta$ is the dependence parameter that characterizes the dependency between the correlated channel gains. Also, \cite{Rostami2024gaussian} proved that the elliptical copula accurately captures the spatial correlation in FAS. Hence, we use a more general elliptical Student-$t$ copula, which also covers the popular Gaussian copula as a special case, to derive the distribution of $g_{\mathrm{fas},u}$.

\begin{Proposition}
The CDF and PDF of the equivalent channel gain $g_{\mathrm{fas},u}$ for user $u$ considering phase errors in the FAS-aided STAR-RIS RSMA are given by \eqref{eq-cdf} and \eqref{eq-pdf} (see top of next page), where $t_{\nu_u}^{-1}\left(\cdot\right)$ is the inverse CDF (quantile function) of the univariate $t$-distribution having $\nu_u$ degrees of freedom for user $u$,  $T_{\nu_u,\mathbf{\Sigma}_u}\left(\cdot\right)$ represents the CDF of the multivariate $t$-distribution with correlation matrix $\mathbf{\Sigma}_u$ and $\nu_u$ degrees of freedom for user $u$, and $\theta_u\in\left[-1,1\right]$ denotes the dependence parameter of the Student-$t$ copula, which represents the correlation between the $n_u$-th and $\tilde{n}_u$-th ports in $\mathbf{\Sigma}_u$. Moreover, $\Gamma\left(\cdot\right)$ is the Gamma function, $\left|\mathbf{\Sigma}_u\right|$ defines the determinant of $\mathbf{\Sigma}_u$, and
\begin{figure*}
\begin{align}
F_{g_{\mathrm{fas},u}}\left(g\right)= T_{\nu_u,\mathbf{\Sigma}_u}\left(t_{\nu_u}^{-1}\left(\frac{1}{\Gamma\left(m\right)}\Upsilon\left(m,\frac{mg}{\overline{g}_u}\right)\right),\dots,t_{\nu_u}^{-1}\left(\frac{1}{\Gamma\left(m\right)}\Upsilon\left(m,\frac{mg}{\overline{g}_u}\right)\right);\nu_u,\theta_u\right) \label{eq-cdf}
\end{align}
\hrulefill
\begin{align}
f_{g_{\mathrm{fas},u}}\left(g\right) =  \left(\frac{m^m}{\Gamma\left(m\right)\overline{g}_u^m} g^{m-1}\exp\left(-\frac{mg}{\overline{g}_u}\right)\right)^{N_u} \frac{\Gamma\left(\frac{\nu_u+{N_u}}{2}\right)}{\Gamma\left(\frac{\nu_u}{2}\right)\sqrt{\left(\pi\nu_u\right)^{N_u}\left|\mathbf{\Sigma}_u\right|}}\left(1+\frac{1}{\nu_u}\left(\mathbf{t}^{-1}_{\nu_u}\right)^T\mathbf{\Sigma}_u^{-1}\mathbf{t}^{-1}_{\nu_u}\right)^{-\frac{\nu_u+N_u}{2}} \label{eq-pdf}
\end{align}
\hrulefill
\end{figure*}
\begin{align}
\mathbf{t}^{-1}_{\nu_u} = \left[t^{-1}_{\nu_u}\left(\frac{\Upsilon\left(m,\frac{mg}{\overline{g}_u}\right)}{\Gamma\left(m\right)}\right),\dots,t^{-1}_{\nu_u}\left(\frac{\Upsilon\left(m,\frac{mg}{\overline{g}_u}\right)}{\Gamma\left(m\right)}\right)\right].
\end{align}
\end{Proposition}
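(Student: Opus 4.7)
The plan is to start from the elementary observation that, since $g_{\mathrm{fas},u}$ is the maximum of the correlated port gains $g_u^{1},\dots,g_u^{N_u}$, the event $\{g_{\mathrm{fas},u}\le g\}$ coincides with $\{g_u^{1}\le g,\dots,g_u^{N_u}\le g\}$, so that $F_{g_{\mathrm{fas},u}}(g)$ equals the joint CDF of the $N_u$ port gains evaluated along the diagonal, as already noted in \eqref{eq-joint}. The joint distribution is then built from the known marginal Gamma CDFs in \eqref{eq-g-cdf} and an appropriate dependence structure. Following the paragraph preceding the statement, I would invoke Sklar's theorem: for continuous RVs there exists a unique copula $C$ such that $F_{g_u^{1},\dots,g_u^{N_u}}(x_1,\dots,x_{N_u})=C(F_{g_u^{1}}(x_1),\dots,F_{g_u^{N_u}}(x_{N_u}))$, and I would select the elliptical Student-$t$ copula motivated by the FAS correlation analysis cited in the text.

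Next I would plug in the explicit form of the Student-$t$ copula,
\begin{equation*}
C(u_1,\dots,u_{N_u};\nu_u,\theta_u)=T_{\nu_u,\mathbf{\Sigma}_u}\!\left(t_{\nu_u}^{-1}(u_1),\dots,t_{\nu_u}^{-1}(u_{N_u})\right),
\end{equation*}
where $\mathbf{\Sigma}_u$ is the $N_u\times N_u$ correlation matrix whose $(n_u,\tilde n_u)$ entry is the Bessel-type coefficient $\varrho_{n_u,\tilde n_u}$, and $t_{\nu_u}^{-1}$ is the univariate Student-$t$ quantile. Substituting the common Gamma marginal $F_{g_u^{n_u}}(g)=\Upsilon(m,mg/\bar g_u)/\Gamma(m)$ at every argument along the diagonal produces exactly the expression \eqref{eq-cdf}. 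This step is largely algebraic once the copula representation and Sklar's theorem have been invoked.

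For the PDF \eqref{eq-pdf}, I would use the equivalent differential form of Sklar's construction: the joint density is the product of the copula density and the marginal densities,
\begin{equation*}
f_{g_u^{1},\dots,g_u^{N_u}}(x_1,\dots,x_{N_u})=c(F_1(x_1),\dots,F_{N_u}(x_{N_u}))\prod_{n_u=1}^{N_u}f_{g_u^{n_u}}(x_{n_u}),
\end{equation*}
with the Student-$t$ copula density
\begin{equation*}
c(u_1,\dots,u_{N_u})=\frac{\Gamma\!\left(\frac{\nu_u+N_u}{2}\right)\Gamma\!\left(\frac{\nu_u}{2}\right)^{N_u-1}}{\Gamma\!\left(\frac{\nu_u+1}{2}\right)^{N_u}\sqrt{|\mathbf{\Sigma}_u|}}\,\bigl(1+\tfrac{1}{\nu_u}(\mathbf{t}^{-1}_{\nu_u})^T\mathbf{\Sigma}_u^{-1}\mathbf{t}^{-1}_{\nu_u}\bigr)^{-(\nu_u+N_u)/2}\!\Big/\!\prod_{n_u}\bigl(1+\tfrac{(t^{-1}_{\nu_u}(u_{n_u}))^2}{\nu_u}\bigr)^{-(\nu_u+1)/2}.
\end{equation*}
Evaluating along the diagonal $x_1=\cdots=x_{N_u}=g$, inserting the Gamma density \eqref{eq-g-pdf} raised to the $N_u$-th power, and collecting the multivariate $t$ kernel yields the compact form \eqref{eq-pdf}. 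In keeping with the FAS literature convention already used in this paper, the quantity named $f_{g_{\mathrm{fas},u}}(g)$ here is the joint density on the diagonal used subsequently in the OP and AC analysis, rather than the strict derivative of $F_{g_{\mathrm{fas},u}}$.

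The main obstacle I anticipate is bookkeeping rather than conceptual difficulty: one must verify that the Gamma approximation for $g_u^{n_u}$ from \cite{Badiu2020Communication} remains valid uniformly across ports even after the phase error moments $\varphi_1,\varphi_2$ are substituted through \eqref{eq-g} and \eqref{eq-m}, and that the spatial correlation $\varrho_{n_u,\tilde n_u}$ can indeed be mapped consistently to the Student-$t$ copula parameter $\theta_u$ populating $\mathbf{\Sigma}_u$. A secondary delicate point is justifying that the elliptical Student-$t$ copula, chosen over the Gaussian special case, is still consistent with the univariate Gamma marginals; this is guaranteed by Sklar's theorem for continuous distributions, but it is worth flagging explicitly since the marginals and the copula come from entirely different families.
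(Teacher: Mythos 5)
Your derivation of the CDF follows the paper's proof essentially verbatim: both start from $F_{g_{\mathrm{fas},u}}(g)=F_{g_u^1,\dots,g_u^{N_u}}(g,\dots,g)$ as in \eqref{eq-joint}, invoke Sklar's theorem with the Student-$t$ copula \eqref{eq-student}, and substitute the Gamma marginal \eqref{eq-g-cdf} at every argument to land on \eqref{eq-cdf}. That part is fine and is the same route the paper takes.

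For the PDF there is a concrete mismatch between the copula density you write down and the expression you claim it produces. The density you quote is the standard Student-$t$ copula density, i.e., the multivariate $t$ density at the quantiles divided by the product of the univariate $t$ densities at those quantiles. Plugging that into $c(\cdot)\prod_{n_u}f_{g_u^{n_u}}(g)$ does \emph{not} collapse to \eqref{eq-pdf}: the Gamma-function ratio $\Gamma(\tfrac{\nu_u}{2})^{N_u-1}/\Gamma(\tfrac{\nu_u+1}{2})^{N_u}$ and the denominator $\prod_{n_u}\bigl(1+(t_{\nu_u}^{-1}(u_{n_u}))^2/\nu_u\bigr)^{-(\nu_u+1)/2}$ in your formula have no counterpart in \eqref{eq-pdf}, which instead carries the bare multivariate-$t$ normalization $\Gamma(\tfrac{\nu_u+N_u}{2})/\bigl(\Gamma(\tfrac{\nu_u}{2})\sqrt{(\pi\nu_u)^{N_u}\left|\mathbf{\Sigma}_u\right|}\bigr)$. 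The paper's proof arrives at \eqref{eq-pdf} precisely because in \eqref{eq-cop-pdf} it uses the multivariate $t$ density itself in place of the copula density before inserting it into \eqref{eq-cop-pdf1}; your (more standard) density is a different object, so you must either adopt that convention explicitly or acknowledge that your construction yields \eqref{eq-pdf} only up to the omitted factors. Separately, your remark that the quantity called $f_{g_{\mathrm{fas},u}}$ is the joint density on the diagonal rather than $\frac{\mathrm{d}}{\mathrm{d}g}F_{g_{\mathrm{fas},u}}(g)$ (which would be the sum of the $N_u$ first-order partials of the joint CDF, not the mixed $N_u$-th partial) is accurate, and is a point the paper's ``chain rule'' step glosses over.
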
 

\begin{proof}
As in \eqref{eq-cdf-fas}, $F_{g_{\mathrm{fas},u}}\left(g\right)$ is defined based on the marginal CDF and any arbitrary copula. Based on the insights from \cite{Rostami2024gaussian}, which demonstrate that an elliptical copula can approximate Jakes' model well in FAS, we utilize the Student-$t$ copula. This approach accounts for positive and negative correlations, both linear and non-linear, while also capturing heavy-tail dependencies. The Student-$t$ copula is defined as
\begin{align}
C\left(v_1,\dots,v_d\right) = T_{\nu,\mathbf{\Sigma}}\left(t^{-1}_\nu\left(v_1\right),\dots,t^{-1}_\nu\left(v_d\right);\nu,\theta\right), \label{eq-student}
\end{align}
in which $t_{\nu}^{-1}\left(\cdot\right)$ denotes the quantile function of the univariate $t$-distribution with $\nu$ degrees of freedom, while  $T_{\nu,\mathbf{\Sigma}}\left(\cdot\right)$ is the CDF of the multivariate $t$-distribution with correlation matrix $\mathbf{\Sigma}$ and $\nu$ degrees of freedom. Additionally, $\theta\in\left[-1,1\right]$ serves the dependence parameter of Student-$t$ copula, quantifying the correlation between any two arbitrary RVs. The Student-$t$ copula is recognized as a more general form of the elliptical copula, converging to the Gaussian copula as the degrees of freedom $\nu$ increase, i.e., $\nu\rightarrow\infty$. Furthermore, the dependence parameter of elliptical copulas, such as the Gaussian copula, closely approximates the correlation coefficient obtained from Jakes' model, i.e., $\theta_u\approx \varrho_{n_v,\tilde{n}_v}$ \cite{Rostami2024gaussian}. Hence, by considering $v_{n_u}=F_{g_u^{n_u}}\left(g\right)$ and the marginal CDF in \eqref{eq-g-cdf}, \eqref{eq-cdf} is derived.

Now, by definition, the PDF $f_{g_{\mathrm{fas},u}}\left(g\right)$ is the derivative of the  CDF $F_{g_{\mathrm{fas},u}}\left(g\right)$, as defined in \eqref{eq-cdf-fas}. Thus, we have
\begin{align}
f_{g_{\mathrm{fas},u}}\left(g\right) = \frac{\partial^{N_u}}{\underset{N_u}{\underbrace{\partial g\dots \partial g}}}C\left(F_{g_u^1}\left(g\right),\dots,F_{g_u^{N_u}}\left(g\right);\theta_u\right).
\end{align}
Now, using the chain rule, we have \cite{Rostami2021copula}
\begin{align}
f_{g_{\mathrm{fas},u}}\left(g\right) = c\left(F_{g_u^1}\left(g\right),\dots,F_{g_u^{N_u}}\left(g\right);\theta_u\right)\prod_{n_u=1}^{N_u} f_{g_u^{n_u}}\left(g\right),\label{eq-cop-pdf1}
\end{align}
where $c\left(\cdot\right)$ is the copula density function. For the Student-$t$ copula with correlation matrix $\mathbf{\Sigma}_u$ and $\nu_u$ degrees of freedom, the copula density function is defined as
\begin{align}\nonumber
&c\left(F_{g_u^1}\left(g\right),\dots,F_{g_u^{N_u}}\left(g\right);\theta_u\right)=\\
& \frac{\Gamma\left(\frac{\nu_u+{N_u}}{2}\right)}{\Gamma\left(\frac{\nu_u}{2}\right)\sqrt{\left(\pi\nu_u\right)^{N_u}\left|\mathbf{\Sigma}_u\right|}}\left(1+\frac{1}{\nu_u}\left(\mathbf{t}^{-1}_{\nu_u}\right)^T\mathbf{\Sigma}_u^{-1}\mathbf{t}^{-1}_{\nu_u}\right)^{-\frac{\nu_u+N_u}{2}},\label{eq-cop-pdf}
\end{align}
where $\mathbf{t}^{-1}_{\nu} = \left[t^{-1}_{\nu_u}\left(F_{g_u^{n_u}}\left(g\right)\right),\dots,t^{-1}_{\nu_u}\left(F_{g_u^{N_u}}\left(g\right)\right)\right]$ defines the vector of quantiles corresponding to the marginal distributions. Finally, by inserting \eqref{eq-cop-pdf} into \eqref{eq-cop-pdf1}, \eqref{eq-pdf} is obtained. 
\end{proof}

\subsection{OP Analysis}
OP quantifies the likelihood that the received SINR falls below a specified limit, leading to communication failure. In the FAS-assisted STAR-RIS RSMA scheme, each user receives a combination of a shared common message, its own private message, and the private messages of another user. These are decoded using a two-stage process. A communication outage is observed if the SINR for decoding either the common message or the private message does not meet the required thresholds, represented by $\gamma_\mathrm{th,c}^u$ for the common message and $\gamma_\mathrm{th,p}^u$ for the private message. Therefore, the OP for the considered system model is derived in the following proposition. 

\begin{Proposition}
The OP in the FAS-assisted STAR-RIS RSMA  considering phase errors is given by \eqref{eq-out}, where $\overline{g}_u$ and $m$ are given by \eqref{eq-g} and \eqref{eq-m}, respectively (see top of next page), and $\gamma_{\mathrm{th}}^u=\max\left\{\hat{\gamma}_\mathrm{th,c}^u,\hat{\gamma}_\mathrm{th,p}^u\right\}$ determines the corresponding threshold, in which 
\begin{figure*}
\begin{align}
P_{\mathrm{o},u}  = T_{\nu_u,\mathbf{\Sigma}_u}\left(t_{\nu_u}^{-1}\left(\frac{1}{\Gamma\left(m\right)}\Upsilon\left(m,\frac{m\gamma_\mathrm{th}^u}{\overline{g}_u}\right)\right),\dots,t_{\nu_u}^{-1}\left(\frac{1}{\Gamma\left(m\right)}\Upsilon\left(m,\frac{m\gamma_\mathrm{th}^u}{\overline{g}_u}\right)\right);\nu_u,\theta_u\right) \label{eq-out}
\end{align}
\hrulefill
\end{figure*}
\begin{align}
\hat{\gamma}_\mathrm{th,c}^u = \frac{\gamma_{\mathrm{th,c}}^u}{\overline{\gamma}L_uK^2\left(\alpha_\mathrm{c}-\left(\alpha_\mathrm{p,r}+\alpha_\mathrm{p,t}\right)\gamma_\mathrm{th,c}^u\right)}\label{eq-th1}
\end{align}
and
\begin{align}
\hat{\gamma}_\mathrm{th,p}^u = \frac{\gamma_{\mathrm{th,p}}^u}{\overline{\gamma}L_uK^2\left(\alpha_{\mathrm{p},u}-\alpha_{\mathrm{p},\bar{u}}\gamma_\mathrm{th,p}^u\right)}.\label{eq-th2}
\end{align}
\end{Proposition}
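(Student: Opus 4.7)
The plan is to translate the SINR-based outage event into a one-dimensional sub-level set for the scalar quantity $g_{\mathrm{fas},u}$, and then invoke Proposition 1 directly. By definition, outage at user $u$ occurs when either the common-message SINR $\gamma_{\mathrm{c},u}$ falls below $\gamma_{\mathrm{th,c}}^u$ or the private-message SINR $\gamma_{\mathrm{p},u}$ falls below $\gamma_{\mathrm{th,p}}^u$; equivalently, the success event is the intersection $\{\gamma_{\mathrm{c},u} \geq \gamma_{\mathrm{th,c}}^u\} \cap \{\gamma_{\mathrm{p},u} \geq \gamma_{\mathrm{th,p}}^u\}$, which I would take as the starting point.

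First I would observe that the SINRs in \eqref{eq-snr-c} and \eqref{eq_SINR_p} are both strictly increasing functions of the same random variable $g_{\mathrm{fas},u}$, since they have the form $\tfrac{a g}{b g + 1}$ with $a,b > 0$. This lets me invert each inequality into a one-sided bound on $g_{\mathrm{fas},u}$. Clearing the (positive) denominator in $\gamma_{\mathrm{c},u} \geq \gamma_{\mathrm{th,c}}^u$ and collecting terms in $g_{\mathrm{fas},u}$ produces $g_{\mathrm{fas},u} \geq \hat{\gamma}_\mathrm{th,c}^u$ with $\hat{\gamma}_\mathrm{th,c}^u$ as in \eqref{eq-th1}; the analogous manipulation on $\gamma_{\mathrm{p},u} \geq \gamma_{\mathrm{th,p}}^u$ yields $g_{\mathrm{fas},u} \geq \hat{\gamma}_\mathrm{th,p}^u$ as in \eqref{eq-th2}.

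Next, the intersection of two lower-bound events on the same scalar random variable is simply the stronger lower bound, so the success event becomes $\{g_{\mathrm{fas},u} \geq \max\{\hat{\gamma}_\mathrm{th,c}^u,\hat{\gamma}_\mathrm{th,p}^u\}\} = \{g_{\mathrm{fas},u} \geq \gamma_\mathrm{th}^u\}$. Hence $P_{\mathrm{o},u} = \Pr(g_{\mathrm{fas},u} < \gamma_\mathrm{th}^u) = F_{g_{\mathrm{fas},u}}(\gamma_\mathrm{th}^u)$, and substituting \eqref{eq-cdf} from Proposition 1 with argument $g = \gamma_\mathrm{th}^u$ delivers the claimed expression in \eqref{eq-out}.

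The main subtlety, rather than an obstacle, is the sign of the brackets $\alpha_\mathrm{c} - (\alpha_\mathrm{p,r}+\alpha_\mathrm{p,t})\gamma_\mathrm{th,c}^u$ and $\alpha_{\mathrm{p},u} - \alpha_{\mathrm{p},\bar{u}}\gamma_\mathrm{th,p}^u$ in \eqref{eq-th1}--\eqref{eq-th2}: the inversion step is valid only when both are positive, which is precisely the standard RSMA feasibility condition on the power-allocation factors. Whenever either bracket is non-positive, the corresponding SINR is upper-bounded by its threshold for every realization of $g_{\mathrm{fas},u}$, and the outage probability collapses to one. Flagging this regime is the only care point; otherwise the proof reduces to monotone inversion of the SINR expressions followed by a direct application of Proposition 1, with no further machinery needed.
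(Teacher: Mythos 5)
Your proposal is correct and follows essentially the same route as the paper: invert the two monotone SINR inequalities into lower bounds on $g_{\mathrm{fas},u}$, take the maximum of the equivalent thresholds, and evaluate the CDF from Proposition~1 at $\gamma_{\mathrm{th}}^u$. Your remark on the positivity of the brackets in \eqref{eq-th1}--\eqref{eq-th2} matches the feasibility conditions the paper defers to Remark~\ref{remark1}.
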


\begin{proof}
User $u$ will experience an outage if either the private message or the common message is in outage, i.e., 
\begin{subequations}
\begin{align}
P_{\mathrm{o},u}& = 1-\Pr\left(\gamma_{\mathrm{c},u}>\gamma_\mathrm{th,c}^u,\gamma_{\mathrm{p},u}>\gamma_\mathrm{th,p}^u\right) \label{p-a}\\ \notag
&=1-\Pr\Bigg(\frac{\overline{\gamma}\alpha_\mathrm{c}L_uK^2g_{\mathrm{fas},u}}{\overline{\gamma}\left(\alpha_\mathrm{p,r}+\alpha_\mathrm{p,t}\right)L_uK^2g_{\mathrm{fas},u}+1}>\gamma_\mathrm{th,c}^u,\label{p-b}\\
&\quad\quad\quad\frac{\overline{\gamma}\alpha_{\mathrm{p},u}L_uK^2g_{\mathrm{fas},u}}{\overline{\gamma}\alpha_{\mathrm{p},\bar{u}}L_uK^2g_{\mathrm{fas},u}+1}>\gamma_\mathrm{th,p}^u\Bigg)\\
& = 1-\Pr\left(g_{\mathrm{fas},u}>\hat{\gamma}_\mathrm{th,c}^u,g_{\mathrm{fas},u}>\hat{\gamma}_\mathrm{th,p}^u\right)\\
& = \Pr\left(g_{\mathrm{fas},u}\leq \max\left\{\hat{\gamma}_\mathrm{th,c}^u,\hat{\gamma}_\mathrm{th,p}^u\right\}\right)\\
& = F_{g_{\mathrm{fas},u}}\left(\gamma_\mathrm{th}^u\right),\label{p-e}
\end{align}
\end{subequations}
where $F_{g_{\mathrm{fas},u}}\left(\cdot\right)$ has been defined in \eqref{eq-cdf}. By inserting $\gamma_{\mathrm{th}}^u=\max\left\{\hat{\gamma}_\mathrm{th,c}^u,\hat{\gamma}_\mathrm{th,p}^u\right\}$ into \eqref{eq-cdf}, \eqref{eq-out} is derived. 
\end{proof}

\begin{corollary}
The asymptotic OP in the high-SNR regime for the considered FAS-assisted STAR-RIS RSMA considering the phase errors is given by (\ref{eq-out-asym}), see top of next page.
\begin{figure*}
\begin{align}
P_{\mathrm{o},u}^\infty  \simeq T_{\nu_u,\mathbf{\Sigma}_u}\left(t_{\nu_k}^{-1}\left(\frac{1}{m\Gamma\left(m\right)}\left(\frac{mg}{\overline{g}_u}\right)^m\right),\dots,t_{\nu_k}^{-1}\left(\frac{1}{m\Gamma\left(m\right)}\left(\frac{mg}{\overline{g}_u}\right)^m\right);\nu_u,\theta_u\right) \label{eq-out-asym}
\end{align}
\hrulefill
\end{figure*}
\end{corollary}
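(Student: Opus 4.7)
The plan is to start from the exact OP expression in \eqref{eq-out} and examine the behavior of its argument as $\overline{\gamma}\to\infty$. Looking at \eqref{eq-th1} and \eqref{eq-th2}, both $\hat{\gamma}_\mathrm{th,c}^u$ and $\hat{\gamma}_\mathrm{th,p}^u$ scale as $1/\overline{\gamma}$, so $\gamma_\mathrm{th}^u = \max\{\hat{\gamma}_\mathrm{th,c}^u,\hat{\gamma}_\mathrm{th,p}^u\}\to 0^+$ in the high-SNR regime. Consequently, the argument $mg/\overline{g}_u$ of the lower incomplete Gamma function in \eqref{eq-out} tends to zero, which is precisely the regime where a convenient series expansion of $\Upsilon(\cdot,\cdot)$ becomes the natural tool.

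Next, I would invoke the Taylor series around the origin,
\begin{equation}
\Upsilon(m,x)=\sum_{k=0}^{\infty}\frac{(-1)^{k}\,x^{m+k}}{k!\,(m+k)},
\end{equation}
and retain only the leading term $x^{m}/m$, so that $\Upsilon(m,mg/\overline{g}_u)/\Gamma(m)\simeq \frac{1}{m\Gamma(m)}(mg/\overline{g}_u)^{m}$ as $\overline{\gamma}\to\infty$. The residual $O(x^{m+1})$ vanishes strictly faster than $x^{m}$, which justifies dropping it to leading order. This replacement transforms the marginal CDF inside each quantile argument in \eqref{eq-out} into its asymptotic monomial form, while leaving the outer copula structure $T_{\nu_u,\mathbf{\Sigma}_u}(\cdot)$ and the inverse $t$-quantile transform $t_{\nu_u}^{-1}(\cdot)$ untouched, because both are continuous mappings that merely compose with the asymptotic marginal.

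After substitution, the expression in \eqref{eq-out-asym} emerges directly; the proof essentially consists of the two substitutions above followed by recognition of the pattern. The main technical obstacle I anticipate is ensuring that retaining only the leading-order term of $\Upsilon(m,\cdot)$ inside the nonlinear composition $T_{\nu_u,\mathbf{\Sigma}_u}\circ t_{\nu_u}^{-1}$ does not distort the asymptotic equivalence. This is addressed by noting that $t_{\nu_u}^{-1}$ is continuous on $(0,1)$ and that as its argument approaches $0^+$, the dominant asymptotic in the Gamma factor carries over faithfully, with higher-order corrections contributing only to subleading terms of the outer multivariate $t$-CDF. Hence the asymptotic equivalence $\simeq$ in \eqref{eq-out-asym} follows by the continuous mapping argument, completing the derivation without requiring any additional machinery beyond what has already been established for the exact OP in Proposition~2.
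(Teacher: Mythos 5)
Your proposal is correct and follows essentially the same route as the paper: expand the lower incomplete Gamma function around zero, keep the leading term $x^m/m$, and substitute the resulting asymptotic marginal CDF into the copula-based OP expression. Your additional observations---that $\gamma_\mathrm{th}^u\propto 1/\overline{\gamma}\to 0$ drives the argument to zero, and that continuity of $t_{\nu_u}^{-1}$ and $T_{\nu_u,\mathbf{\Sigma}_u}$ preserves the asymptotic equivalence under composition---are justifications the paper leaves implicit.
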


\begin{proof}
In the high SNR regime, i.e., $\overline{\gamma}\rightarrow\infty$, the CDF in \eqref{eq-g-cdf} is derived as
\begin{align}
F_{g_u^{n_u}}^\infty\left(g\right)\simeq \frac{1}{m\Gamma\left(m\right)}\left(\frac{mg}{\overline{g}_u}\right)^m.\label{eq-cdf-asym}
\end{align}
By inserting \eqref{eq-cdf-asym} into the OP definition, \eqref{eq-out-asym} is obtained. 
\end{proof}

\begin{remark}\label{remark1}
The RSMA power allocation factors $\alpha_\mathrm{c}$ and $\alpha_{\mathrm{p},u}$, as well as the outage thresholds $\gamma_\mathrm{th,c}^u$ and $\gamma_\mathrm{th,p}^u$ are mutually dependent through the following constraints
\begin{align}
\gamma_{\mathrm{th,c}}^u<\frac{\alpha_\mathrm{c}}{1-\alpha_\mathrm{c}}\label{eq-con1}
\end{align}
and
\begin{align}
\gamma_{\mathrm{th,p}}^u<\frac{\alpha_{\mathrm{p},u}}{1-\alpha_\mathrm{c}-\alpha_{\mathrm{p},u}},\label{eq-con2}
\end{align}
where \eqref{eq-con1} and \eqref{eq-con2} are derived by assuming $\hat{\gamma}_\mathrm{th,c}^u>0$ and $\hat{\gamma}_\mathrm{th,p}^u>0$ in \eqref{eq-th1} and \eqref{eq-th2}, respectively. If the constraints are not satisfied, the system operates in an invalid region. This behavior is thoroughly illustrated in Fig.~\ref{fig_op_alphac} in Section \ref{sec-num}.  
\end{remark}

\subsection{AC Analysis}
The channel capacity for both the common and private messages $v\in\left\{\mathrm{c,p}\right\}$ of user $u$ can be found by
\begin{align}
\mathcal{C}_{v,u} = \log_2\left(1+\gamma_{v,u}\right), 
\end{align}
and thus, the corresponding AC (bps/Hz) is expressed as
\begin{align}\label{eq-c-def}
\overline{\mathcal{C}}_{v,u} =\mathbb{E}\left[\mathcal{C}_{v,u}\right]= \int_0^\infty \log_2\left(1+\gamma_{v,u}\left(g\right)\right)f_{g_{\mathrm{fas},u}}\left(g\right)\mathrm{d}g,
\end{align}
where $f_{g_{\mathrm{fas},u}}\left(g\right)$ is provided by \eqref{eq-g-pdf}. Given the form of \eqref{eq-g-pdf}, it is clear that obtaining a closed-form expression for \eqref{eq-c-def} is mathematically challenging. While numerical techniques, such as Gauss-Laguerre quadrature, could offer practical solutions, we instead introduce a heuristic method to estimate the AC, as presented in the following proposition.

\begin{Proposition}
The ACs in the FAS-assisted STAR-RIS RSMA  considering phase errors for the common and private messages are, respectively, given by
\begin{align}
\overline{\mathcal{C}}_{\mathrm{c},u}\approx \log_2\left(1+ \frac{\overline{\gamma}\alpha_\mathrm{c}L_uK^2\overline{g}_{\mathrm{fas},u}}{\overline{\gamma}\left(\alpha_\mathrm{p,r}+\alpha_\mathrm{p,t}\right)L_uK^2\overline{g}_{\mathrm{fas},u}+1}\right)
\end{align}
and
\begin{align}
\overline{\mathcal{C}}_{\mathrm{p},u}\approx \log_2\left(1+ \frac{\overline{\gamma}\alpha_{\mathrm{p},u}L_uK^2\overline{g}_{\mathrm{fas},u}}{\overline{\gamma}\alpha_{\mathrm{p},\bar{u}}L_uK^2\overline{g}_{\mathrm{fas},u}+1}\right),
\end{align}
where
\begin{align}
\overline{g}_{{\mathrm{fas},u}}\approx \overline{g}_u+\frac{\overline{g}_u^2}{m}\sqrt{1+\rho_\mathrm{eff}}\Phi^{-1}\left(\frac{N_u}{N_u-1}\right),\label{eq-mean-corr}
\end{align}
in which $\rho_\mathrm{eff}=\frac{\varrho_{n_u,\tilde{n}_u}}{N_u\left(N_u-1\right)}$ and $\Phi^{-1}\left(\cdot\right)$ is the standard normal quantile function.
\end{Proposition}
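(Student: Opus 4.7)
My plan is to combine a Jensen-type bound on the capacity expression with an extreme-value/Gaussian-style approximation for the expected maximum of correlated Gamma variates. First, I would rewrite the capacity as the expectation of a function of the single scalar $g_{\mathrm{fas},u}$:
\begin{equation*}
\overline{\mathcal{C}}_{v,u}=\mathbb{E}\!\left[\log_2\!\left(1+\frac{\overline{\gamma}\alpha_{v,u}L_u K^2 g_{\mathrm{fas},u}}{\overline{\gamma}\alpha_{\bar v,u}L_u K^2 g_{\mathrm{fas},u}+1}\right)\right],
\end{equation*}
where $\alpha_{v,u}\in\{\alpha_{\mathrm{c}},\alpha_{\mathrm{p},u}\}$ and $\alpha_{\bar v,u}$ is the aggregate interfering power fraction. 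A direct calculation of $\partial^2/\partial g^2$ shows that the mapping $g\mapsto\log_2\bigl(1+\tfrac{a g}{b g+1}\bigr)$ is concave on $g\ge 0$ whenever $a\ge 0$ and $b\ge 0$, which is automatic here since the power fractions are non-negative. Hence by Jensen's inequality the expectation is upper bounded by the same expression evaluated at $\overline{g}_{\mathrm{fas},u}:=\mathbb{E}[g_{\mathrm{fas},u}]$; invoking this bound as a heuristic approximation (as the authors do) yields the two displayed capacity formulas immediately, modulo the value of $\overline{g}_{\mathrm{fas},u}$.

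The remaining and most delicate step is to estimate $\overline{g}_{\mathrm{fas},u}=\mathbb{E}\bigl[\max_{n_u}g_u^{n_u}\bigr]$ where the $N_u$ marginals are Gamma$(m,\overline{g}_u/m)$ (so that each has mean $\overline{g}_u$ and variance $\overline{g}_u^{2}/m$) and the dependence is governed by the Student-$t$ copula from Proposition~1. My strategy would be to argue that, because $m$ is typically large (it scales linearly in $K$ by \eqref{eq-m}), each $g_u^{n_u}$ is approximately normal by a central-limit argument, so the block of marginals can be treated as a jointly Gaussian vector with mean $\overline{g}_u$, variance $\overline{g}_u^{2}/m$, and correlation structure inherited from $\mathbf{\Sigma}_u$. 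For the maximum of $N_u$ equicorrelated standard Gaussians with pairwise correlation $\rho$, the classical extreme-value/moment heuristic gives $\mathbb{E}[\max]\approx\sqrt{1-\rho}\,\Phi^{-1}(1-1/N_u)$ plus a common component, which after replacing the off-diagonal correlation by its average $\rho_{\mathrm{eff}}=\varrho_{n_u,\tilde n_u}/[N_u(N_u-1)]$ and rescaling by the Gamma mean and standard deviation yields the claimed
\begin{equation*}
\overline{g}_{\mathrm{fas},u}\approx\overline{g}_u+\frac{\overline{g}_u^{2}}{m}\sqrt{1+\rho_{\mathrm{eff}}}\,\Phi^{-1}\!\left(\tfrac{N_u}{N_u-1}\right).
\end{equation*}
The factor $\sqrt{1+\rho_{\mathrm{eff}}}$ arises from approximating the variance of the dominating order statistic when a weak positive common component is shared across ports, while the quantile term captures the gain of taking the best of $N_u$ samples.

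The main obstacle is rigorously justifying the extreme-value approximation for correlated Gamma variates: classical Fisher--Tippett asymptotics apply only in the large-$N_u$ i.i.d.\ limit, and here we additionally have (i) finite and possibly small $N_u^{l}$ in each spatial dimension, (ii) strong, distance-dependent spatial correlation through $\varrho_{n_u,\tilde n_u}$, and (iii) heavier-than-Gaussian tails coming from the Student-$t$ copula with finite $\nu_u$. I would therefore frame the result explicitly as a heuristic approximation, tie the Gaussian reduction to the large-$K$ Nakagami--Gamma limit used in \eqref{eq-g-pdf}--\eqref{eq-m}, and rely on Section~\ref{sec-num} to validate numerically that $\overline{g}_{\mathrm{fas},u}$ in \eqref{eq-mean-corr}, when plugged into the Jensen-evaluated SINR, tracks the Monte-Carlo AC closely across the operating regime.
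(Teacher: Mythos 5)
Your Jensen step is essentially the paper's: the authors also replace $\mathbb{E}[\log_2(1+\gamma_{v,u}(g))]$ by $\log_2(1+\gamma_{v,u}(\mathbb{E}[g_{\mathrm{fas},u}]))$ as a heuristic, and your explicit check that $g\mapsto\log_2\bigl(1+\tfrac{ag}{bg+1}\bigr)$ is concave for $a,b\ge 0$ is a welcome (slightly more careful) justification of that step. For the second half, however, your route diverges from the paper and, as written, does not actually produce \eqref{eq-mean-corr}. The paper's heuristic is: (i) approximate $\mathbb{E}[\max]$ of $N_u$ \emph{i.i.d.} Gammas by the marginal quantile at level $\tfrac{N_u-1}{N_u}$; (ii) Gaussian-approximate the Gamma CDF so that $F^{-1}(p)\approx \overline{g}_u+\sigma\,\Phi^{-1}(p)$; (iii) posit, as a separate heuristic, that correlation inflates the variance of the maximum by the factor $(1+\rho_\mathrm{eff})$ with $\rho_\mathrm{eff}$ the average off-diagonal entry of $\mathbf{\Sigma}_u$, and fold $\sqrt{1+\rho_\mathrm{eff}}$ into the quantile term. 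You instead invoke the classical equicorrelated-Gaussian decomposition, whose standard conclusion is $\mathbb{E}[\max]\approx\mu+\sigma\sqrt{1-\rho}\,\Phi^{-1}\!\left(1-\tfrac{1}{N_u}\right)$: positive correlation \emph{shrinks} the selection gain. Asserting that this "after replacing the off-diagonal correlation by its average ... yields" the claimed $\sqrt{1+\rho_\mathrm{eff}}$ is a genuine gap — the lemma you cite gives a correction factor of the opposite tendency ($\sqrt{1-\rho}$ versus $\sqrt{1+\rho}$), so it cannot yield the stated formula, and no argument is offered to bridge the two. To match the proposition you would need to adopt (or justify) the paper's variance-inflation ansatz $\mathrm{Var}[g_{\mathrm{fas},u}]\approx(1+\rho_\mathrm{eff})\,\mathrm{Var}[g_{\mathrm{ind},u}]$ explicitly, rather than derive the factor from the equicorrelated-max result.

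Two smaller observations. First, your derivation quietly uses the quantile level $1-\tfrac{1}{N_u}$, which is the level appearing throughout the paper's proof; the argument $\tfrac{N_u}{N_u-1}$ in the proposition's final display (and the missing summation in the definition of $\rho_\mathrm{eff}$) are evidently typographical, so no fault there. Second, your framing of the whole construction as a heuristic validated numerically, and your tying of the Gaussian reduction to the large-$K$ Gamma parameters, is consistent with the paper's Remark on the method's scope; the only substantive defect is the unjustified jump from $\sqrt{1-\rho}$ to $\sqrt{1+\rho_\mathrm{eff}}$ described above.
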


\begin{proof}
By applying Jensen's inequality into \eqref{eq-c-def}, we have 
\begin{align}
\overline{\mathcal{C}}_{v,u} \approx \log_2\left(1+\mathbb{E}\left[\gamma_{v,u}\left(g\right)\right]\right), 
\end{align}
and as a result, the AC expressions for the messages $v\in\left\{\mathrm{c,p}\right\}$ can be derived as
\begin{align}
\overline{\mathcal{C}}_{\mathrm{c},u}\approx \log_2\left(1+ \frac{\overline{\gamma}\alpha_\mathrm{c}L_uK^2\mathbb{E}\left[g_{\mathrm{fas},u}\right]}{\overline{\gamma}\left(\alpha_\mathrm{p,r}+\alpha_\mathrm{p,t}\right)L_uK^2\mathbb{E}\left[g_{\mathrm{fas},u}\right]+1}\right)
\end{align}
and
\begin{align}
\overline{\mathcal{C}}_{\mathrm{p},u}\approx \log_2\left(1+ \frac{\overline{\gamma}\alpha_{\mathrm{p},u}L_uK^2\mathbb{E}\left[g_{\mathrm{fas},u}\right]}{\overline{\gamma}\alpha_{\mathrm{p},\bar{u}}L_uK^2\mathbb{E}\left[g_{\mathrm{fas},u}\right]+1}\right). 
\end{align}
Now, we need to find the expectation of $g_{\mathrm{fas},u}$, which is the expectation of the maximum of $N_u$ correlated Gamma RVs, as defined in \eqref{eq-gain}. To achieve this, we propose a heuristic approach in which we first calculate the expectation of the maximum of $N_u$ independent Gamma RVs. This is then extended to the correlated case by incorporating a heuristic term based on the Student-$t$ copula. Hence, assuming that $g_u^1,\dots,g_u^{n_u}$ for $n_u=1,\dots,N_u$ are i.i.d.~RVs, the distribution of the maximum $g_{\mathrm{ind},u} = \left\{g_u^1,\dots,g_u^{n_u}\right\}$ is given by $F_{g_{\mathrm{ind},u}}\left(g\right)=F_{g_u^{n_u}}\left(g\right)^{N_u}$, where $F_{g_u^{n_u}}\left(g\right)$ is the CDF of each individual RV. 

As $N_u$ grows large, the maximum becomes a more predictable extreme value. In the large $N_u$ limit, the distribution of $g_\mathrm{ind}$ starts to resemble an extreme value distribution, with $F_{g_{\mathrm{ind},u}}(g)$ approaching $1$ for most values of $g$. This reflects the increasing dominance of the largest values in the sample as $N_u$ increases. For large $N_u$, $\mathbb{E}\left[g_{\mathrm{ind},u}\right]$ is approximately the quantile of the individual CDF evaluated at the extreme upper probability $\frac{N_u-1}{N_u}$. This quantile corresponds to the threshold beyond which the maximum value is found with very high confidence. Specifically, as $N_u$ increases, the maximum converges to a value $g^*$ such that the probability of $g_{\mathrm{ind},u}$ being less than $g^*$ is approximately \cite{Leadbetter2012extremes}
\begin{align}
F_{g_{\mathrm{ind},u}}\left(g^*\right) = \frac{N_u-1}{N_u}.
\end{align}
Substituting $F_{g_{\mathrm{ind},u}}\left(g\right)=F_{g_u^{n_u}}(g)^{N_u}$ and taking the $N_u$-th root, we have
\begin{align}
F_{g_u^{n_u}}\left(g^*\right) = \left(\frac{N_u-1}{N_u}\right)^{1/N_u}\overset{(a)}{\approx} \frac{N_u-1}{N_u},
\end{align}
where $(a)$ is derived by using a first-order approximation of $\left(\frac{N_u-1}{N_u}\right)^{1/N_u}$ when $N_u$ is large \cite{Abramowitz1972applied}. Therefore, the quantile $g^*$ can be expressed using the inverse CDF as
\begin{align}
g^*\approx F^{-1}_{g_u^{n_u}}\left(\frac{N_u-1}{N_u}\right). \label{eq-g*}
\end{align}
For large $N_u$, the maximum becomes increasingly predictable as its distribution converges to an extreme value distribution, specifically the Gumbel distribution for light-tailed variables like the Gamma distribution. This theoretical foundation justifies the use of the quantile $F^{-1}_{g_u^{n_u}}\left(\frac{N_u-1}{N_u}\right)$ to approximate the mean of the maximum. It reflects the high probability of the largest observed value being close to this threshold.

Now, for the Gamma distribution, which has finite moments, we can approximate the CDF $F_{g_u^{n_u}}\left(g\right)$ around its mean $\mu=\overline{g}_u$ using the standard normal CDF $\Phi\left(\frac{g-\mu}{\sigma}\right)$ \cite{Johnson1995continuous,Kendall1987kendall}, i.e., 
\begin{align}
F_{g_u^{n_u}}\left(g\right)\approx\Phi\left(\frac{g-\mu}{\sigma}\right),
\end{align}
where $\sigma=\sqrt{\mathrm{Var}\left[g\right]}=\frac{\overline{g}_u^2}{m}$. Subsequently, the inverse CDF is derived as \cite{David2004order}
\begin{align}
F^{-1}_{g_u^{n_u}}\left(g\right)\approx\mu+\sigma\Phi^{-1}\left(p\right),
\end{align}
where by substituting $p=\frac{N_u-1}{N_u}$, \eqref{eq-g*} is rewritten as
\begin{align}
g^* \approx \overline{g}_u+\frac{\overline{g}_u^2}{m}\Phi^{-1}\left(\frac{N_u-1}{N_u}\right).
\end{align}
For large $N_u$, the mean of the maximum $g_{\mathrm{ind},u}$ of $N_u$ RVs is close to $g^*$ \cite{Mardia2000directional}, giving
\begin{align}
\mathbb{E}\left[g_{\mathrm{ind},u}\right]\approx \overline{g}_u+\frac{\overline{g}_u^2}{m}\Phi^{-1}\left(\frac{N_u-1}{N_u}\right). \label{eq-mean-ind}
\end{align}
Now we incorporate dependence using the Student-$t$ copula with the correlation matrix $\mathbf{\Sigma}_u$ that governs the dependence between the RVs. It is noted that while the full correlation matrix $\mathbf{\Sigma}_u$ provides a detailed pairwise dependency structure, working directly with it for large $N_u$ becomes computationally prohibitive. Therefore, given that copula-induced dependency influences the effective spread of the maximum, we approximate the effect of correlation by scaling the independent approximation with a dependency factor $\rho_\mathrm{eff}$, the average pairwise correlation coefficient extracted from $\mathbf{\Sigma}_u$ (excluding the diagonal entries, which are all ones), as
\begin{align}
\rho_\mathrm{eff} = \frac{\mathop{\mathbf{\Sigma}_u} \varrho_{n_u, \tilde{n}_u}}{N_u \left(N_u - 1\right)}, 
\end{align}
where $N_u\left(N_u-1\right)$ is the number of off-diagonal terms in the correlation matrix. Therefore, by summing these terms and dividing by the total number of pairs is derived, which ensures that $\rho_\mathrm{eff}$ is normalized and lies in the range $\left[0,1\right]$. Consequently, given that the variance of the maximum of correlated RVs differs from the independent case, the adjusted variance can be expressed as
\begin{align}
\mathrm{Var}\left[g_{\mathrm{fas},u}\right]\approx \mathrm{Var}\left[g_{\mathrm{ind},u}\right]\left(1+\rho_\mathrm{eff}\right).\label{eq-varcor}
\end{align}
By applying \eqref{eq-varcor} into \eqref{eq-mean-ind}, $\overline{g}_{{\mathrm{fas},u}}$ is obtained as \eqref{eq-mean-corr}.
\end{proof}

\begin{remark}\label{remark2}
The heuristic method employed to approximate the expectation of the maximum of $N_u$ correlated Gamma RVs, while based on simplifying assumptions, provides useful insights into the AC's behavior, particularly for moderate correlations. By leveraging the effective correlation coefficient derived from the Student-$t$ copula, this approach captures the main characteristics of the maximum, offering a computationally efficient approximation. Although the method might not fully account for complex dependencies due to combining the individual distribution's mean and variance with the effective correlation parameter $\rho_\mathrm{eff}$, it serves as a practical tool when the correlation is moderate, as it balances accuracy with simplicity. As will be demonstrated in Section \ref{sec-num}, simulations show that this approximation performs well across various scenarios, validating its effectiveness for analyzing correlated RVs in the context of the model under consideration.
\end{remark}

\section{Numerical Results}\label{sec-num}
In this section, we evaluate the performance of the considered FAS-aided  STAR-RIS RSMA system in terms of the OP and AC. To do so, we have set the system parameters to $\chi=2.1$, $\alpha_\mathrm{c}=0.6$, $\nu=40$, $\mathcal{K}=1$, $\kappa=8$, $N_u=4$, $W_u=0.5\lambda^2$, $\gamma_\mathrm{th,c}^u=0$ dB, $\gamma_\mathrm{th,p}^\mathrm{r}=0$ dB, and $\gamma_\mathrm{th,p}^\mathrm{t}=-7$ dB. Additionally, for the considered ES mode, we set $\beta_{\mathrm{r}}=0.8$ and $\beta_{\mathrm{t}}=\sqrt{1-\beta_{\mathrm{r}}^2}$. Furthermore, we assume that the BS is located at the origin $(0,0,0)$, the STAR-RIS is located at $(40,40,0)$, and the FAS-equipped users $\mathrm{r}$ and $\mathrm{t}$ are placed at $(20,20,0)$m and $(20,20,0)$m, respectively. We also set the power allocation factors of the private messages as $\alpha_\mathrm{p,r}=0.75\left(1-\alpha_\mathrm{c}\right)$ and $\alpha_\mathrm{p,t}=0.25\left(1-\alpha_\mathrm{c}\right)$. We use the MATLAB function \texttt{copulacdf} to implement the Student-$t$ copula, which is expressed in terms of the joint CDF of the multivariate $t$-distribution. We also consider the ideal phase case for both the FAS and TAS under RSMA as a benchmark. 

\begin{figure}[!t]
\centering
\includegraphics[width=1\columnwidth]{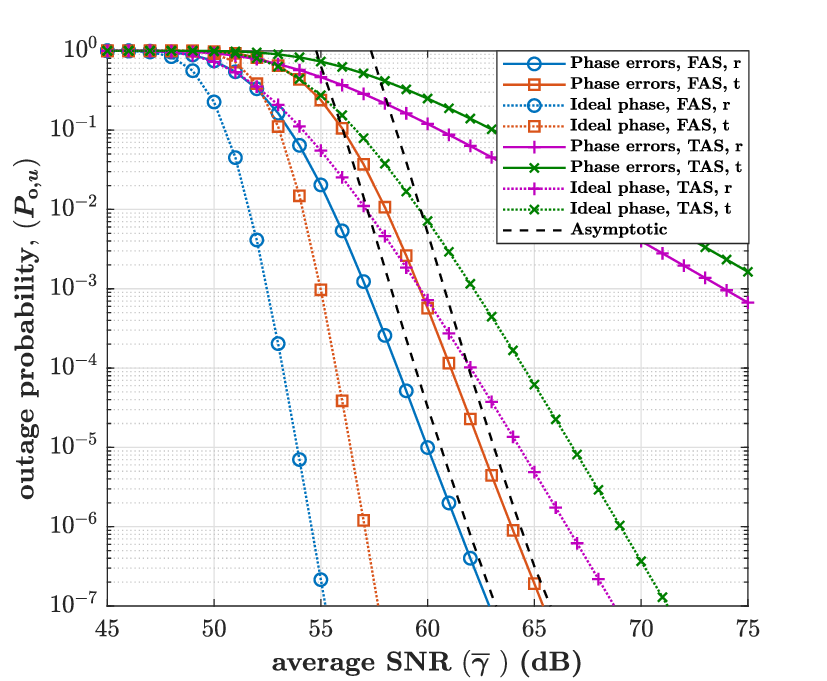}
\caption{The OP results of FAS-assisted STAR-RIS RSMA versus the average SNR $\overline{\gamma}$ with phase errors and ideal phases for $K=30$.}\label{fig_op_snr_phase}
\end{figure}

Fig.~\ref{fig_op_snr_phase} shows the behavior of OP against the average SNR $\overline{\gamma}$ for the RSMA user $u$ in the proposed STAR-RIS scenario when both phase errors and ideal phase cases are considered. First, it can be observed that the asymptotic results closely align with the OP curves in the high-SNR regime, confirming the accuracy of our theoretical analysis. Furthermore, we can see that as $\overline{\gamma}$ increases, the OP performance improves for both FAS and TAS. This behavior is expected, as a higher average SNR reflects enhanced channel conditions, leading to stronger received signals and reduced likelihood of outage. It is also evident that the ideal phase configuration provides a lower OP for both FAS and TAS compared to the phase errors scenario. This disparity is primarily due to the superior phase alignment in the ideal phase case, which ensures beamforming efficiency, and enhanced diversity gain. On the other hand, the phase errors  disrupt the alignment of reflected and transmitted signals, leading to degraded performance and higher OP. 

\begin{figure}[!t]
\centering
\includegraphics[width=1\columnwidth]{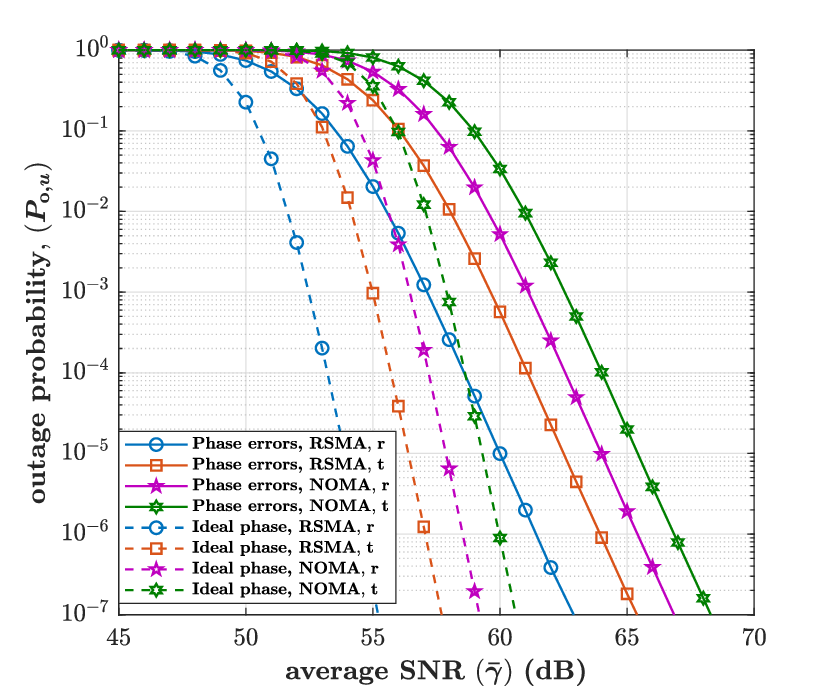}
\caption{The OP results of FAS-assisted STAR-RIS versus the average SNR $\overline{\gamma}$ with phase errors and ideal phase for different multiple access scenarios when $K=30$.}\label{fig_op_noma}
\end{figure}

In Fig.~\ref{fig_op_noma}, we examine the OP performance of the STAR-RIS-aided system by comparing RSMA to NOMA. The results demonstrate that RSMA outperforms NOMA across different conditions, including both phase errors and ideal phase scenarios. Specifically, RSMA achieves a significantly lower OP at higher SNR levels, highlighting its superior interference management and more flexible resource allocation. Unlike NOMA, which relies on rigid power domain multiplexing, RSMA dynamically splits messages into common and private parts, enabling more efficient interference mitigation and improved decoding reliability. This adaptability allows RSMA to better handle variations in channel conditions, making it particularly advantageous in dynamic wireless environments. Additionally, RSMA maintains a consistent performance advantage over NOMA for both users, even in the presence of phase errors, demonstrating its robustness against channel imperfections.

\begin{figure}[!t]
\centering
\includegraphics[width=1\columnwidth]{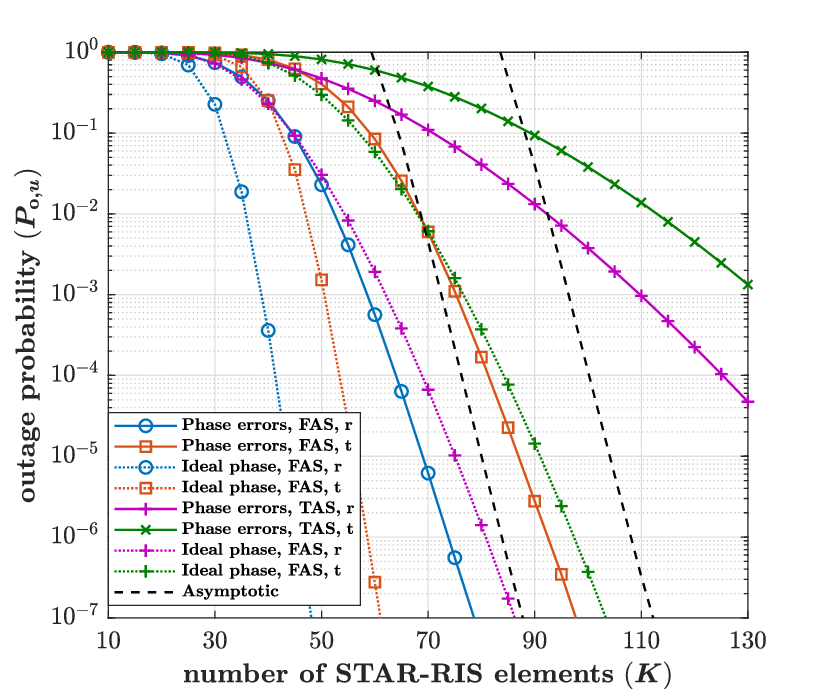}
\caption{The OP results of FAS-assisted STAR-RIS RSMA versus the number of STAR-RIS elements $K$  with phase errors and ideal phase for $\overline{\gamma}=50$ dB.}\label{fig_op_k}
\end{figure}

The impact of the number of STAR-RIS elements $K$ on the OP performance is presented in Fig.~\ref{fig_op_k} for the both FAS and TAS when the phase errors and ideal phase are considered. It can be seen that as $K$ increases, the system experiences a significant reduction in OP due to the enhanced array gain and more focused beamforming capabilities. The larger aperture created by additional elements allows for better spatial control over signal propagation, improved constructive interference, and more effective mitigation of fading effects, resulting in superior link reliability. By closely examining the curves in Fig.~\ref{fig_op_k}, it is also evident that the performance gap between the ideal phase and phase errors scenarios widens as the number of STAR-RIS elements increases. This is mainly because, in the ideal phase case, each additional element perfectly contributes to constructive interference, maximizing the array gain and beamforming efficiency. However, in the phase errors scenario, phase misalignments introduced by errors accumulate as the number of elements grows, leading to greater degradation in performance. As a result, while both cases benefit from more elements, the relative advantage of ideal phase alignment becomes increasingly evident as the array size increases.

\begin{figure}[!t]
\centering
\includegraphics[width=1\columnwidth]{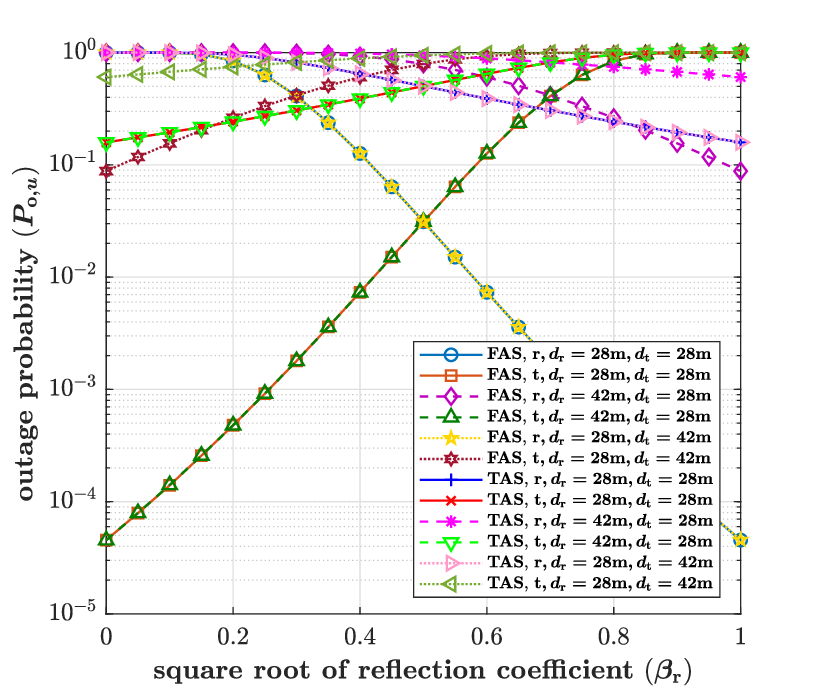}
\caption{The OP results of FAS-assisted STAR-RIS RSMA versus the square root of reflection coefficient $\beta_\mathrm{r}$ with phase errors for $\overline{\gamma}=50$ dB and $K=55$.}\label{fig_op_beta}
\end{figure}

The effect of the square root of transmission coefficient $\beta_\mathrm{r}=\sqrt{1-\beta_\mathrm{t}^2}$ on the OP performance for different values of distances under the phase errors case is indicated in Fig.~\ref{fig_op_beta}. For both FAS and TAS, we can see that as $\beta_\mathrm{r}$ grows, the OP for user $\mathrm{r}$ and user $\mathrm{t}$ decreases and increases, respectively. This behavior is mainly attributed to the trade-off between power allocation to the reflection and transmission paths. Specifically, when $\beta_\mathrm{r}$ increases, a larger portion of the incident signal is reflected towards user $\mathrm{r}$, resulting in a stronger received signal at this user. This enhances the SNR for user $\mathrm{r}$, thereby reducing the likelihood of an outage and improving its OP performance. On the other hand, the increase in $\beta_\mathrm{r}$ directly reduces the power allocated to the transmission path, decreasing the signal strength at user $\mathrm{t}$. Consequently, the SNR for user $\mathrm{t}$ drops, leading to a higher probability of outage and hence a higher OP for user $\mathrm{t}$. Furthermore, the distance between the STAR-RIS elements and each user plays a crucial role in determining the required reflection and transmission power. As the distance between the STAR-RIS and a user grows, the effective path gain decreases, which leads to a reduction in SNR. To compensate for this reduction and maintain the same OP, a higher value of $\beta_u$ is required. For example, if user $\mathrm{r}$ (user $\mathrm{t}$) is placed farther away, the system must allocate more power to the reflection (transmission) path to compensate for the weaker signal at that distance. Therefore, the impact of $\beta_\mathrm{r}$ ($\beta_\mathrm{t}$) becomes more significant at larger distances, as the system relies more on the reflection (transmission) path to support user $\mathrm{r}$ (user $\mathrm{t}$).

\begin{figure}[!t]
\centering
\includegraphics[width=1\columnwidth]{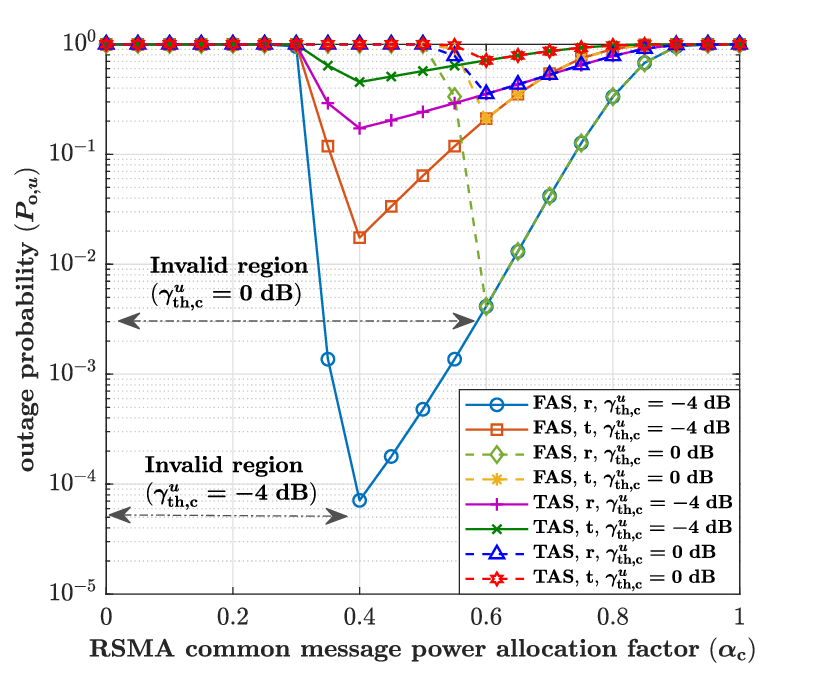}
\caption{The OP results of FAS-assisted STAR-RIS RSMA versus the common message power allocation factor $\alpha_\mathrm{c}$ with phase errors for $\overline{\gamma}=50$ dB and $K=55$.}\label{fig_op_alphac}
\end{figure}

The OP performance versus the RSMA common message power allocation factor $\alpha_\mathrm{c}$ for different values of the SNR threshold $\gamma_\mathrm{th,c}^u$ is presented in Fig.~\ref{fig_op_alphac}. As discussed in Remark \ref{remark1}, $\alpha_\mathrm{c}$ and $\gamma_\mathrm{th,c}^u$ must satisfy condition \eqref{eq-con1}; otherwise, they fall into an invalid region, as depicted in the figure. We see that the OP initially decreases with increasing $\alpha_\mathrm{c}$, reaches an optimal value, and then increases, forming a convex-like shape. The optimal $\alpha_\mathrm{c}$ depends on $\gamma_\mathrm{th,c}^u$ but remains independent of the user type. Accordingly, the OP weakens as $\gamma_\mathrm{th,c}^u$  deteriorates, while the invalid region becomes wider, with its upper limit shifting towards the right. This convex-like behavior is attributed to the trade-off regarding the power allocation over the common and private messages. For higher values of $\alpha_\mathrm{c}$, $\alpha_\mathrm{p}$ becomes insufficient, causing the private message SNR to fall below the threshold and leading to outages. Conversely, for lower values of $\alpha_\mathrm{c}$, the common message SNR cannot meet the threshold, resulting in an outage of the common message. Thus, the system experiences outages at both extremes of $\alpha_\mathrm{c}$, so it is important to find a balance in the power allocation.

\begin{figure}[!t]
\centering
\includegraphics[width=1\columnwidth]{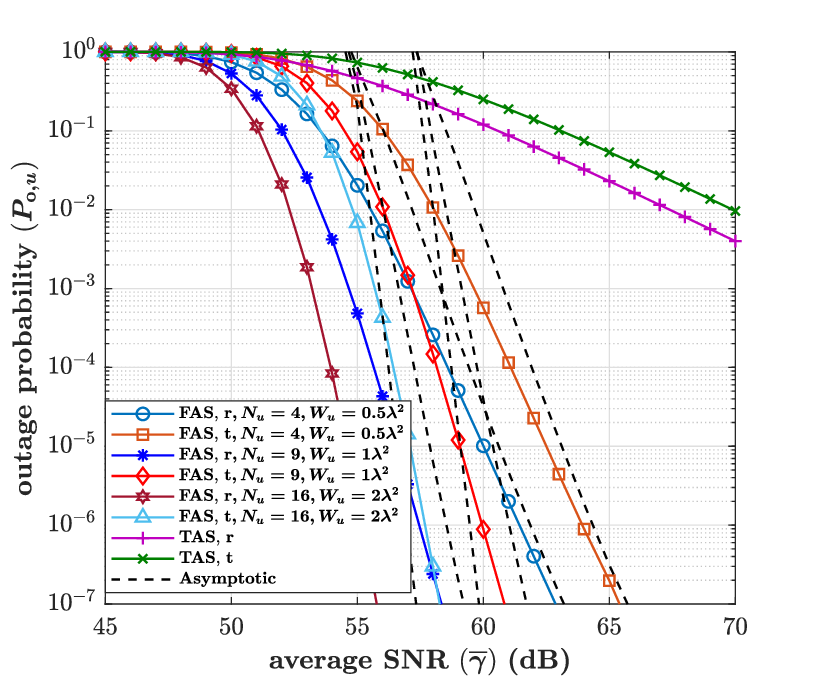}
\caption{The OP results of FAS-assisted STAR-RIS RSMA versus the average SNR $\overline{\gamma}$ with phase errors for different FAS sizes $W_u$ and numbers of FAS elements $N_u$, considering $K=30$.}\label{fig_op_nw}
\end{figure}

Fig.~\ref{fig_op_nw} demonstrates how the FAS size $W_u$ and the number of FAS nodes $N_u$ influence the OP performance as $\overline{\gamma}$ varies. From the figure, it is evident that simultaneously increasing both $W_u$ and $N_u$ results in a lower OP. This improvement is primarily due to the larger fluid antenna size, which enhances spatial separation between the ports, thereby reducing spatial correlation. Although increasing the number of ports can lead to stronger spatial correlation, it also has the potential to improve channel capacity, diversity gain, and spatial multiplexing. Therefore, by simultaneously increasing $W_u$ and $N_u$, the spatial correlation between the fluid antenna ports becomes balanced and hence, a lower OP is achieved. Additionally, it is observed that TAS results in a higher OP compared to the deployment of FAS in RSMA users. This contrast underscores the limitations of fixed antenna configurations in effectively utilizing spatial diversity, especially in channels with rich scattering.

\begin{figure}[!t]
\centering
\includegraphics[width=1\columnwidth]{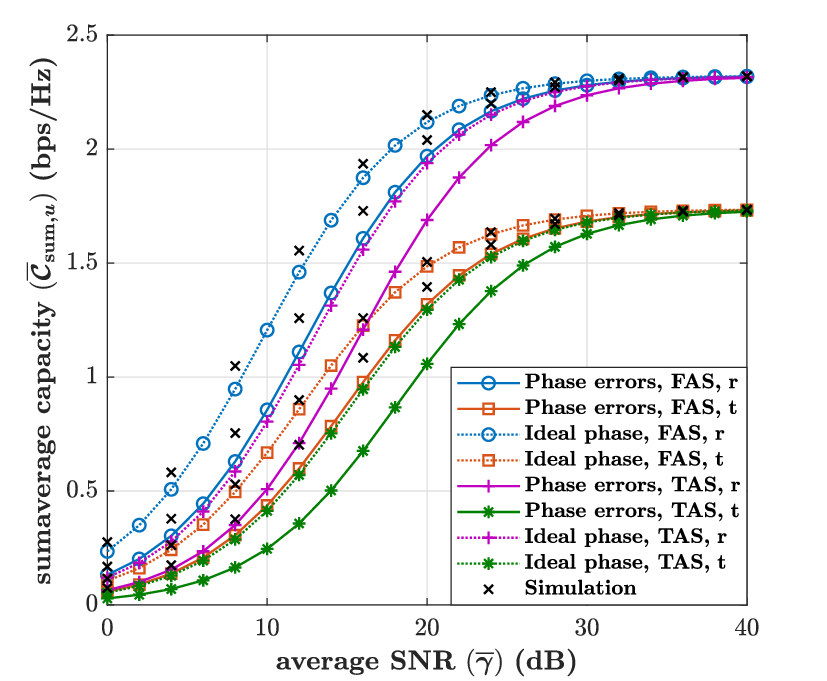}
\caption{The sum AC results of FAS-assisted STAR-RIS RSMA versus the average SNR $\overline{\gamma}$ with phase errors and ideal phases for $K=30$.}\label{fig_ac_snr_phase}
\end{figure}

The results in Fig.~\ref{fig_ac_snr_phase} study the impact of phase errors on the sum AC, $\overline{\mathcal{C}}_{\mathrm{sum},u}=\overline{\mathcal{C}}_{\mathrm{c},u}+\overline{\mathcal{C}}_{\mathrm{p},u}$. First of all, as expected, a small gap between the simulation results and the theoretical predictions is observed. This discrepancy arises from the use of a heuristic technique to approximate the expectation of the maximum of $N_u$ correlated Gamma RVs, as discussed in Remark \ref{remark2}. The heuristic method, based on simplifying assumptions, provides a computationally efficient approximation that captures the essential behavior of the system, especially in cases involving moderate correlations. Despite the gap, the results indicate that this approach performs well across a range of scenarios, making it a useful approximation for evaluating the AC performance under correlated conditions. For both FAS and TAS, we see that as $\overline{\gamma}$ increases, the sum AC initially improves but eventually saturates. This occurs because in the RSMA scheme, rate splitting effectively manages interference and optimizes power allocation at moderate SNR levels. As $\overline{\gamma}$ grows, the SINR expressions in \eqref{eq-snr-c} and \eqref{eq_SINR_p} converge to deterministic values, depending on the power allocation factors for the common and private streams. Beyond this threshold, further increases in SNR do not significantly improve the sum AC, resulting in saturation. Additionally, the case with ideal phase results in a higher level of sum AC compared to the scenario with phase errors. This is because with ideal phase, STAR-RIS can perfectly align the transmitted signals with the desired reflection coefficients, thereby maximizing the signal power at the receiver and minimizing interference. In contrast, phase errors introduce misalignment between the transmitter and STAR-RIS, leading to suboptimal signal reflection and increased interference. As a result, the effective channel gains are reduced in the presence of phase errors, causing a degradation in the sum AC. Noteworthy, the asymptotic behavior for the FAS and TAS schemes is independent of the phase errors, so does the performance degradation because the phase mismatch is reduced as $\overline\gamma$ grows.

\begin{figure}[!t]
\centering
\includegraphics[width=1\columnwidth]{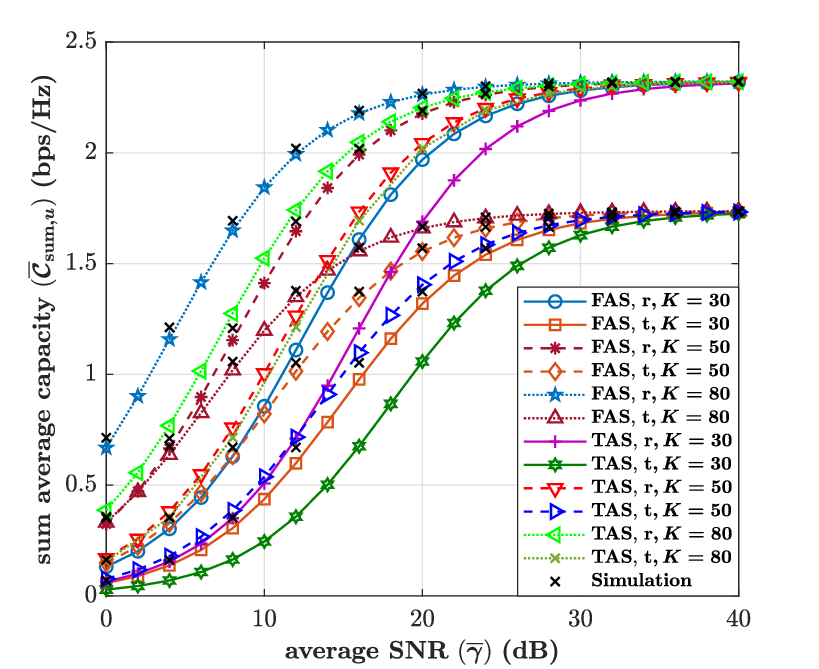}
\caption{The sum AC results of FAS-assisted STAR-RIS RSMA versus the average SNR $\overline{\gamma}$ with phase errors for different numbers of STAR-RIS elements $K$.}\label{Fig_ac_k}
\end{figure}

Finally, Fig.~\ref{Fig_ac_k} studies how the number of STAR-RIS elements, $K$, affects the sum AC. As we can see, a larger $K$ provides a higher sum AC for both FAS and TAS. Increasing $K$ also results in a higher sum AC for both FAS and TAS. This gain occurs because a larger number of STAR-RIS elements increases the surface's ability to adaptively reconfigure the channel, allowing for better alignment of the reflected signals. With more elements, STAR-RIS can have more precise control over the spatial distribution of the reflected signals, effectively enhancing the SNR and reducing interference. Furthermore, a larger $K$ enables better exploitation of spatial diversity, thereby improving the capacity for both RSMA users. As such, the increase in the number of STAR-RIS elements contributes to a more efficient channel utilization, leading to a higher sum AC. We can also see that FAS provides a higher sum AC than the TAS counterpart. This is because FAS can dynamically adjust its configuration, improving channel alignment and spatial diversity, while TAS, with its fixed setup, lacks this flexibility, limiting its performance in varying conditions. As in the previous figure, the sum AC saturates to a value that only depends on the RSMA power allocation factors, as the dependence on $K$ is cancelled out for sufficiently large $\overline{\gamma}$.

\section{Conclusion}\label{sec-con}
This paper presented a comprehensive investigation of FAS-assisted STAR-RIS communication systems employing RSMA signaling, with a particular focus on the impact of phase errors under the ES protocol. By modeling the phase errors with a generic distribution and deriving the equivalent channel gain characterized by the multivariate $t$-distribution, we provided a realistic and robust analysis of the system performance. Compact analytical expressions for key performance metrics, such as outage OP and AC were derived, with the latter being provided by a novel heuristic approach. Our numerical results demonstrated that integrating FAS into STAR-RIS-enabled RSMA communication systems yields significant performance improvements compared to TAS, particularly for mitigating interference and enhancing adaptability. Additionally, we evaluated the impact of phase errors in the STAR-RIS, revealing the trade-offs associated with practical scenarios. 


\begin{thebibliography}{1}
\bibitem{Gua2021Enabling}
F. Guo {\em et al.}, ``Enabling massive IoT toward 6G: A comprehensive survey,'' {\em IEEE Internet of Things J.}, vol. 8, no. 15, pp. 11891--11915, 1 Aug. 2021.
\bibitem{Tariq2020}
F.~Tariq \emph{et~al.}, ``A speculative study on {6G},'' \emph{IEEE Wireless Commun.}, vol.~27, no.~4, pp. 118--125, Aug. 2020.
\bibitem{Andrews20246G}
J. G. Andrews, T. E. Humphreys and T. Ji, ``6G takes shape,'' {\em IEEE BITS the Information Theory Magazine}, \url{doi: 10.1109/MBITS.2024.3504521}, Nov. 2024.
\bibitem{10054381}
C.-X. Wang {\em et al.}, ``On the road to {6G}: Visions, requirements, key technologies, and testbeds,'' {\em IEEE Commun.  Surv. \& Tut.}, vol.~25, no.~2, pp.~905--974, Secondquarter 2023.

\bibitem{ETSI_RSMA} 
ETSI, ``ETSI's group on multiple access techniques for 6G networks,'' Dec. 2023. 
\bibitem{Rimoldi-1996}
B. Rimoldi and R. Urbanke, ``A rate-splitting approach to the Gaussian multiple-access channel,'' {\em IEEE Trans. Inf. Theory}, vol. 42, pp. 364--375, Mar. 1996. 
\bibitem{Mao-2022tut}
Y. Mao {\em et al.}, ``Rate-splitting multiple access: Fundamentals, survey, and future research trends,'' {\em IEEE Commun. Surv. \& Tut.}, vol. 24, no. 4, pp. 2073--2126, Fourthquarter 2022.
\bibitem{Clerckx-2023jsac}
B. Clerckx {\em et al.}, ``A primer on rate-splitting multiple access: Tutorial, myths, and frequently asked questions,'' {\em IEEE J. Select. Areas Commun.}, vol. 41, no. 5, pp. 1265--1308, May 2023.

\bibitem{RefEE}
C.~Huang, A.~Zappone, G.~C. Alexandropoulos, M.~Debbah, and C.~Yuen, ``Reconfigurable intelligent surfaces for energy efficiency in wireless communication,'' \emph{IEEE Trans. Wireless Commun.}, vol.~18, no.~8, pp. 4157--4170, Aug. 2019.
\bibitem{Renzo-2020}
M. Di Renzo {\em et al.}, ``Smart radio environments empowered by reconfigurable intelligent surfaces: How it works, state of research, and the road ahead,'' {\em IEEE J. Select. Areas Commun.}, vol. 38, no. 11, pp. 2450--2525, Nov. 2020.
\bibitem{Tang-2020}
W. Tang {\em et al.}, ``MIMO transmission through reconfigurable intelligent surface: System design, analysis, and implementation,'' {\em IEEE J. Select. Areas Commun.}, vol. 38, no. 11, pp. 2683--2699, Nov. 2020.
\bibitem{RIS-SRE-2023}
G. C. Alexandropoulos \emph{et al.}, ``RIS-enabled smart wireless environments: Deployment scenarios, network architecture, bandwidth and area of influence,'' \emph{EURASIP J. Wireless Commun. Netw.}, 103, pp. 1--38, Oct. 2023.
\bibitem{Ref1new}
E. Basar {\em et al.}, ``Reconfigurable intelligent surfaces for 6G: Emerging applications and open challenges,'' \emph{IEEE Veh. Technol. Mag.}, vol. 19, no. 3, pp. 27--47, Sep. 2024.

\bibitem{ETSI_RIS} 
ETSI, ``ETSI's group on reconfigurable intelligent surfaces for 6G networks,'' 2024. 
\bibitem{Palomares2023enabling}
A. Palomares-Caballero {\em et al.}, ``Enabling intelligent omni-surfaces in the polarization domain: Principles, implementation and applications," {\em IEEE Commun. Mag.}, vol. 61, no. 11, pp. 144--150, Nov. 2023.
\bibitem{Liu2021star}
Y. Liu {\em et al.}, ``STAR: Simultaneous transmission and reflection for $360^\circ$ coverage by intelligent surfaces,'' {\em IEEE Wireless Commun.}, vol. 28, no. 6, pp. 102--109, Dec. 2021.
\bibitem{Mu2022Simultaneously}
X. Mu, Y. Liu, L. Guo, J. Lin and R. Schober, ``Simultaneously transmitting and reflecting (STAR) RIS aided wireless communications,'' {\em IEEE Trans. Wireless Commun.}, vol. 21, no. 5, pp. 3083--3098, May 2022.

\bibitem{wong2020FAS}
K. K. Wong, K. F. Tong, Y. Zhang, and Z. Zheng, ``Fluid antenna system for {6G}: When {Bruce Lee} inspires wireless communications,'' {\em Elect. Lett.}, vol.~56, no.~24, pp.~1288--1290, Nov. 2020.
\bibitem{wong2022bruce}
K.-K. Wong, K.-F. Tong, Y.~Shen, Y.~Chen, and Y.~Zhang, ``{Bruce Lee}-inspired fluid antenna system: Six research topics and the potentials for {6G},'' {\em Front. Commun. Netw. Sec. Wireless Commun.}, vol. 3, Mar. 2022.
\bibitem{wu2024fluid}
T.~Wu {\em et al.}, ``Fluid antenna systems enabling {6G}: Principles, applications, and research directions,'' {\em arXiv preprint}, \url{arXiv:2412.03839}, 2024.
\bibitem{Wang-2024ai}
C. Wang {\em et al.}, ``AI-empowered fluid antenna systems: Opportunities, challenges, and future directions,'' {\em IEEE Wireless Commun.}, vol. 31, no. 5, pp. 34--41, Oct. 2024.
\bibitem{New2024aTutorial}
W. K. New {\em et al.}, ``A tutorial on fluid antenna system for 6G networks: Encompassing communication theory, optimization methods and hardware designs,'' \emph{IEEE Commun. Surv. \& Tutor.}, \url{doi: 10.1109/COMST.2024.3498855}, 2024.
\bibitem{Lu-2025}
W.-J. Lu {\em et al.}, ``Fluid antennas: Reshaping intrinsic properties for flexible radiation characteristics in intelligent wireless networks,'' accepted in {\em IEEE Commun. Mag.}, 2025.
\bibitem{zhu2024historical} 
L.~Zhu and K. K. Wong, ``Historical review of fluid antenna and movable antenna,'' {\em arXiv preprint}, \url{arXiv:2401.02362v2}, 2024.
\bibitem{10480333}
J.~Zheng {\em et al.},  ``Flexible-position {MIMO} for wireless communications: Fundamentals, challenges, and future directions,'' {\em IEEE Wireless Commun.}, vol.~31, no.~5, pp.~18--26, Oct. 2024.
\bibitem{Yang-2025pa}
Z. Yang {\em et al.}, ``Pinching antennas: Principles, applications and challenges,'' {\em arXiv preprint}, \url{arXiv:2501.10753}, 2025.

\bibitem{huang2021liquid}
Y. Huang, L. Xing, C. Song, S. Wang and F. Elhouni, ``Liquid antennas: Past, present and future,'' {\em IEEE Open J. Antennas \& Propag.}, vol.~2, pp. 473--487, Mar. 2021.
\bibitem{shen2024design}
Y. Shen {\em et al.}, ``Design and implementation of mmWave surface wave enabled fluid antennas and experimental results for fluid antenna multiple access,'' {\em arXiv preprint}, \url{arXiv:2405.09663}, May 2024.
\bibitem{Shamim-2025}
R. Wang {\em et al.}, ``Electromagnetically reconfigurable fluid antenna system for wireless communications: Design, modeling, algorithm, fabrication, and experiment,'' {\em arXiv preprint}, \url{arXiv:2502.19643v2}, 2025.
\bibitem{basbug2017design}
S. Basbug, ``Design and synthesis of antenna array with movable elements along semicircular paths,'' {\em IEEE Antennas Wireless Propag. Lett.}, vol.~16, pp. 3059--3062, Oct. 2017.
\bibitem{johnson2015sidelobe}
M. C. Johnson, S. L. Brunton, N. B. Kundtz, and J. N. Kutz, ``Sidelobe canceling for reconfigurable holographic metamaterial antenna,'' {\em IEEE Trans. Antennas \& Propag.}, vol.~63, no.~4, pp.~1881--1886, Apr. 2015.
\bibitem{hoang2021computational}
T. V. Hoang, V. Fusco, T. Fromenteze and O. Yurduseven, ``Computational polarimetric imaging using two-dimensional dynamic metasurface apertures,'' {\em IEEE Open J. Antennas \& Propag.}, vol.~2, pp. 488--497, Mar. 2021.
\bibitem{Liu-2025arxiv}
B. Liu, K. F. Tong, K. K. Wong, C.-B. Chae, and H. Wong, ``Be water, my antennas: Riding on radio wave fluctuation in nature for spatial multiplexing using programmable meta-fluid antenna,'' {\em arXiv preprint}, \url{arXiv:2502.04693}, 2025.
\bibitem{zhang2024pixel}
J. Zhang {\em et al.}, ``A novel pixel-based reconfigurable antenna applied in fluid antenna systems with high switching speed,'' {\em IEEE Open J. Antennas \& Propag.}, \url{doi: 10.1109/OJAP.2024.3489215}, 2024.

\bibitem{Wong-fas2020cl}
K.~K. Wong, A.~Shojaeifard, K.-F. Tong, and Y.~Zhang, ``Performance limits of fluid antenna systems,'' {\em IEEE Commun. Lett.}, vol.~24, no.~11, pp.~2469--2472, Nov. 2020.
\bibitem{wong2020fluid}
K.-K. Wong, A.~Shojaeifard, K.-F. Tong, and Y.~Zhang, ``Fluid antenna systems,'' \emph{IEEE Trans. Wireless Commun.}, vol.~20, no.~3, pp. 1950--1962, Mar. 2021.
\bibitem{Khammassi2023}
M. Khammassi, A. Kammoun and M.-S. Alouini, ``A new analytical approximation of the fluid antenna system channel,'' {\em IEEE Trans. Wireless Commun.}, vol. 22, no. 12, pp. 8843--8858, Dec. 2023.
\bibitem{Espinosa-2024}
P. Ram\'{i}rez-Espinosa, D. Morales-Jimenez, and K. K. Wong, ``A new spatial block-correlation model for fluid antenna systems,'' {\em IEEE Trans. Wireless Commun.}, vol. 23, no. 11, pp. 15829--15843, Nov. 2024.
\bibitem{New2023fluid}
W. K. New, K. K. Wong, H. Xu, K. F. Tong and C.-B. Chae, ``Fluid antenna system: New insights on outage probability and diversity gain,''  {\em IEEE Trans. Wireless Commun.}, vol. 23, no. 1, pp. 128--140, Jan. 2024.
\bibitem{Vega2023asimple}
J. D. Vega-S$\acute{\rm a}$nchez, L. Urquiza-Aguiar, M. C. P. Paredes, and D. P. M. Osorio, ``A simple method for the performance analysis of fluid antenna systems under correlated Nakagami-$m$ fading,'' {\em IEEE Wireless Commun. Lett.}, vol. 13, no. 2, pp. 377--381, Feb. 2024.
\bibitem{New2024information}
W. K. New, K. K. Wong, H. Xu, K. F. Tong, and C.-B. Chae, ``An information-theoretic characterization of {MIMO-FAS}: Optimization, diversity-multiplexing tradeoff and \textit{q}-outage capacity,'' \emph{IEEE Trans. Wireless Commun.}, vol. 23, no. 6, pp. 5541--5556, Jun. 2024.

\bibitem{xu2024channel}
H. Xu {\em et al.}, ``Channel estimation for FAS-assisted multiuser mmWave systems,'' {\em IEEE Commun. Lett.}, vol. 28, no. 3, pp. 632--636, Mar. 2024.
\bibitem{zhang2023successive}
Z. Zhang, J. Zhu, L. Dai, and R. W. Heath Jr, ``Successive Bayesian reconstructor for channel estimation in fluid antenna systems,'' {\em IEEE Trans. Wireless Commun.}, \url{doi: 10.1109/TWC.2024.3515135}, 2024.
\bibitem{Xu-2025ce}
B. Xu, Y. Chen, Q. Cui, X. Tao, and K. K. Wong, ``Sparse Bayesian learning-based channel estimation for fluid antenna systems,'' {\em IEEE Wireless Commun. Lett.}, vol. 14, no. 2, pp. 325--329, Feb. 2025.
\bibitem{10751774}
W.~K. New {\em et al.}, ``Channel estimation and reconstruction in fluid antenna system: Oversampling is essential,'' {\em IEEE Trans. Wireless Commun.}, vol. 24, no. 1, pp. 309--322, Jan. 2025.

\bibitem{Tang-2023}
B. Tang {\em et al.}, ``Fluid antenna enabling secret communications,'' {\em IEEE Commun. Lett.}. vol. 27, no. 6, pp. 1491--1495, Jun. 2023.
\bibitem{Ghadi-2024dec}
F. R. Ghadi {\em et al.}, ``Physical layer security over fluid antenna systems: Secrecy performance analysis,'' {\em IEEE Trans. Wireless Commun.}, vol. 23, no. 12, pp. 18201--18213, Dec. 2024.
\bibitem{Wang-2024isac}
C. Wang {\em et al.}, ``Fluid antenna system liberating multiuser MIMO for ISAC via deep reinforcement learning,'' {\em IEEE Trans. Wireless Commun.}, vol. 23, no. 9, pp. 10879--10894, Sept. 2024.
\bibitem{zhou2024fasisac}
L. Zhou, J. Yao, M. Jin, T. Wu and K. K. Wong, ``Fluid antenna-assisted ISAC systems,'' \emph{IEEE Wireless Commun. Lett.}, vol. 13, no. 12, pp. 3533--3537, Dec. 2024.
\bibitem{Zou-2024}
J. Zou {\em et al.}, ``Shifting the ISAC trade-off with fluid antenna systems,'' {\em IEEE Wireless Commun. Lett.}, vol. 13, no. 12, pp. 3479--3483, Dec. 2024.

\bibitem{Rostami2024performance}
F. Rostami Ghadi {\em et al.}, ``On performance of RIS-aided fluid antenna systems,'' {\em IEEE Wireless Commun. Lett.}, vol. 13, no. 8, pp. 2175--2179, Aug. 2024.
\bibitem{Rostami2024secrecy}
F. Rostami Ghadi {\em et al.}, ``Secrecy performance analysis of RIS-aided fluid antenna systems,'' {\em arXiv preprint}, \url{arXiv:2408.14969}, Aug. 2024.
\bibitem{Wong2023fluid}
K.-K. Wong, K.-F. Tong and C.-B. Chae, ``Fluid antenna system--Part III: A new paradigm of distributed artificial scattering surfaces for massive connectivity,'' {\em IEEE Commun. Lett.}, vol. 27, no. 8, pp. 1929--1933, Aug. 2023.
\bibitem{Zhu2024fluid}
J. Zhu {\em et al.}, ``Fluid antenna empowered index modulation for RIS-aided mmWave transmissions,'' {\em IEEE Trans. Wireless Commun.}, vol. 24, no. 2, pp. 1635--1647, Feb. 2025.
\bibitem{Wang2024performance}
G. Wang, J. Zhang, K. Xu, Y. Ni and Y. Wu, ``Performance analysis of multi-RIS-aided fluid antenna systems under Nakagami-$m$ fading channels,'' in {\em Proc. Int. Conf. Wireless Commun. Signal Process. (WCSP)}, pp. 360--365, 24-26 Oct. 2024, Hefei, China.

\bibitem{Papazafeiropoulos2022coverage}
A. Papazafeiropoulos, Z. Abdullah, P. Kourtessis, S. Kisseleff and I. Krikidis, ``Coverage probability of STAR-RIS-assisted massive MIMO systems with correlation and phase errors,'' {\em IEEE Wireless Commun. Lett.}, vol. 11, no. 8, pp. 1738--1742, Aug. 2022.
\bibitem{Xu2022correlated}
J. Xu, Y. Liu, X. Mu, R. Schober and H. V. Poor, ``STAR-RISs: A correlated T\&R phase-shift model and practical phase-shift configuration strategies,'' {\em IEEE J. Select. Topics Sig. Process.}, vol. 16, no. 5, pp. 1097--1111, Aug. 2022.
\bibitem{Qian2024performance}
J. Qian, R. Murch, and K. B. Letaief, ``Performance analysis of STAR-RIS-assisted cell-free massive MIMO systems with electromagnetic interference and phase errors,'' {\em arXiv preprint}, \url{arXiv:2411.14030}, Nov. 2024.
\bibitem{Khel2024analytical}
A. M. T. Khel and K. A. Hamdi, ``Analytical performance evaluation of STAR-RIS assisted terahertz wireless communications,'' {\em IEEE Trans. Veh. Technol.}, vol. 73, no. 4, pp. 5500--5515, Apr. 2024.
\bibitem{Khalid2023simultaneously}
W. Khalid, M. A. Ur Rehman, T. Van Chien and H. Yu, ``Simultaneously transmitting and reflecting-reconfigurable intelligent surfaces with hardware impairment and phase error,'' in {\em Proc. Int. Conf. Artif. Intell. Inf. Commun. (ICAIIC),} 20-23 Feb. 2023, Bali, Indonesia.
\bibitem{Liu2024star}
Y. Liu {\em et al.}, ``STAR-RIS assisted wireless communications with discrete phase shifter,'' {\em IEEE Wireless Commun. Lett.}, vol. 13, no. 11, pp. 3152--3156, Nov. 2024.

\bibitem{Dhok2022rate}
S. Dhok and P. K. Sharma, ``Rate-splitting multiple access with STAR RIS over spatially-correlated channels,'' {\em IEEE Trans. Commun.}, vol. 70, no. 10, pp. 6410--6424, Oct. 2022.
\bibitem{Katwe2023improved}
M. Katwe, K. Singh, B. Clerckx and C.-P. Li, ``Improved spectral efficiency in STAR-RIS aided uplink communication using rate splitting multiple access,'' {\em IEEE Trans. Wireless Commun.}, vol. 22, no. 8, pp. 5365--5382, Aug. 2023.
\bibitem{Maghrebi2024cooperative}
Y. Maghrebi, M. Elhattab, C. Assi, A. Ghrayeb and G. Kaddoum, ``Cooperative rate splitting multiple access for active STAR-RIS assisted downlink communications,'' {\em IEEE Wireless Commun. Lett.}, vol. 13, no. 10, pp. 2827--2831, Oct. 2024.
\bibitem{Krishnan2025star}
A. Krishnan, S. Sabapathy and S. Maruthu, ``STAR-IRS assisted rate splitting multiple access with perfect and imperfect CSI for 6G communication,'' {\em IEEE Lat. Am. Trans.}, vol. 23, no. 1, pp. 17--24, Jan. 2025.
\bibitem{Rostami2024fluid}
F. Rostami Ghadi, K. K. Wong, F. J.  L\'{o}pez-Mart\'{i}nez, L. Hanzo, C.-B Chae, ``Fluid antenna-aided rate-splitting multiple access,'' {\em arXiv preprint}, \url{arXiv:2411.11453}, Nov. 2024.

\bibitem{Hong2024star}
Y. Hong, Y. Zhao, C. Yuen and X. Qing, ``A STAR-RIS with independent 1-bit wave control,'' in {\em Proc. IEEE Int. Symp. Antennas \& Propag. \& INC/USNC-URSI Radio Sci. Meeting (AP-S/INC-USNC-URSI)}, pp. 813--814, 14-19 Jul. 2024, Firenze, Italy.
\bibitem{Xu2022performance}
J. Xu, Y. Liu and X. Mu, ``Performance analysis for the coupled phase-shift STAR-RISs,'' in {\em Proc. IEEE Wireless Commun. Netw. Conf. (WCNC)}, pp. 489--493, 10-13 Apr. 2022, Austin, TX, USA.
\bibitem{Rostami2023analytical}
F. Rostami Ghadi, F. J. L\'{o}pez-Mart\'{i}nez and K. K. Wong, ``Analytical characterization of coverage regions for STAR-RIS-aided NOMA/OMA communication systems,'' {\em IEEE Commun. Lett.}, vol. 27, no. 11, pp. 3063--3067, Nov. 2023.


\bibitem{Badiu2020Communication}
M. A. Badiu and J. P. Coon, ``Communication through a large reflecting surface with phase errors,'' {\em IEEE Wireless Commun. Lett.}, vol. 9, no. 2, pp. 184--188, Feb. 2020.


\bibitem{Mardia2000directional}
K. V. Mardia and P. E. Jupp, {\em Directional statistics}, Chichester, U.K.: Wiley, 2000.

\bibitem{Rostami2021copula}
F. Rostami Ghadi and G. A. Hodtani, ``Copula-based analysis of physical layer security performances over correlated Rayleigh fading channels,'' {\em IEEE Trans. Inf. Forensics Security.}, vol. 16, pp. 431--440, Aug. 2020.

\bibitem{Rostami2024gaussian}
F. Rostami Ghadi {\em et al.}, ``A Gaussian copula approach to the performance analysis of fluid antenna systems,'' {\em IEEE Trans. Wireless Commun.}, vol. 23, no. 11, pp. 17573--17585, Nov. 2024.

\bibitem{Leadbetter2012extremes}
M. R. Leadbetter, G. Lindgren, and H. Rootz\'{e}n, ``Extremes and related properties of random sequences and processes,'' {\em Springer Sci. \& Business Media}, Dec. 2012.

\bibitem{Abramowitz1972applied}
M. Abramowitz, {\em Handbook of mathematical functions}, Applied Mathematics Series, 1972.

\bibitem{Johnson1995continuous}
N. L. Johnson, S. Kotz, and N.  Balakrishnan, {\em Continuous univariate distributions}, John Wiley \& Sons, vol. 2, May 1995.

\bibitem{Kendall1987kendall}
M. G. Kendall, A Stuart, and J. K. Ord, {\em Kendall's advanced theory of statistics}, Oxford University Press, Inc., 1987.

\bibitem{David2004order}
H. A. David and H. N. Nagaraja, {\em Order statistics}, John Wiley \& Sons, Mar. 2004.
\end{thebibliography}

\end{document}